\documentclass[conference]{IEEEtran}
\IEEEoverridecommandlockouts
\addtolength{\topmargin}{3mm}
\usepackage{footnote}
\usepackage[utf8]{inputenc}
\usepackage[english]{babel}
\usepackage[T1]{fontenc}
\usepackage{amsmath, amssymb, amsthm}
\usepackage{mathtools}
\usepackage{braket}
\usepackage{cuted}
\usepackage{halloweenmath}
\usepackage{cite}
\usepackage{graphicx}
\usepackage{xcolor}
\usepackage[dvipsnames]{xcolor}
\usepackage{tcolorbox}
\usepackage{tabularx}
\usepackage{colortbl}
\usepackage{tikz}
\usepackage{enumitem}
\usepackage{textcomp}
\usepackage{algorithm}
\usepackage{algpseudocode}
\usepackage{float}
\usepackage{placeins}
\usepackage[font=footnotesize]{caption}
\usepackage{subcaption}
\setcounter{MaxMatrixCols}{20}
%\numberwithin{figure}{section}
%\captionsetup{tablewithin=none}

%\newcommand{\RNum}[1]{\uppercase\expandafter{\romannumeral #1\relax}}
%\usepackage[font=small,format=plain,labelfont=bf,up,labelsep=space]{caption}
\usepackage[left=0.64in,right=0.64in,top=0.72in,bottom=1.07in]{geometry}
\newtheorem{theorem}{Theorem}
\newtheorem{lemma}{Lemma}

\newtheorem{definition}{Definition}

\setlength{\belowcaptionskip}{-15pt}
\newlength{\myeqskip} 
\setlength{\myeqskip}{7pt}
\setlength{\columnsep}{0.21in}
\AtBeginDocument{%
    \setlength\abovedisplayskip{\myeqskip}%
    \setlength\belowdisplayskip{\myeqskip}%
    \setlength\abovedisplayshortskip{\myeqskip-\baselineskip}%
    \setlength\belowdisplayshortskip{\myeqskip}}
\bibliographystyle{IEEEtran}

\def\BibTeX{{\rm B\kern-.05em{\sc i\kern-.025em b}\kern-.08em
		T\kern-.1667em\lower.7ex\hbox{E}\kern-.125emX}}

\allowdisplaybreaks
\begin{document}
% \title{Non-Binary Quasi-Cyclic LDPC Codes with Entanglement Assistance{\footnotesize \textsuperscript{}}
% 		\thanks{The authors are with the Department of Electronic Systems Engineering, Indian Institute of Science, Bengaluru 560012, India. (Email: \{pavankumar1, abhisharma, shayangs\}@iisc.ac.in). P. Kumar is supported by a post-doctoral fellowship from the Department of Science and Technology (DST), Govt. of India through the grant award SERB/F/3132/2023-2024 to S. S. Garani.}
% 	} 

% % %%% Single author, or several authors with the same affiliation:
%  \author{\IEEEauthorblockN{Pavan Kumar, Abhi Kumar Sharma and Shayan  Srinivasa Garani}
%   %\IEEEauthorblockA{Department of Electronic Systems Engineering, Indian Institute of Science, Bengaluru-560012, India\\
%     %Emails:\{pavankumar1@iisc.ac.in, abhisharma@iisc.ac.in, shayangs@iisc.ac.in\}}
%  }

\title{Two-dimensional Entanglement-assisted Quantum Quasi-cyclic Low-density Parity-check Codes}
 \author{\IEEEauthorblockN{Pavan Kumar and Shayan  Srinivasa Garani}
  \IEEEauthorblockA{Department of Electronic Systems Engineering, Indian Institute of Science, Bengaluru-560012, India\\
    Emails: \{pavankumar1, shayangs\}@iisc.ac.in }
    }
\maketitle
\begin{abstract}
For any positive integer $g \ge 2$, we derive general condition for the existence of a $2g$-cycle in the Tanner graph of two-dimensional ($2$-D) classical quasi-cyclic (QC) low-density parity-check (LDPC) codes. Depending on whether $p$ is an odd prime or a composite number, we construct two distinct families of $2$-D classical QC-LDPC codes with girth $>4$ by stacking $p \times p \times p$ tensors. Furthermore, using generalized Behrend sequences, we propose an additional family of $2$-D classical QC-LDPC codes with girth $>6$, constructed via a similar tensor-stacking approach.  All the proposed $2\text{-D}$ classical QC-LDPC codes exhibit an erasure correction capability of at least $p \times p$. Based on the constructed $2\text{-D}$ classical QC-LDPC codes, we derive two families of $2\text{-D}$  entanglement-assisted (EA) quantum low-density parity-check (QLDPC) codes. The first family of $2\text{-D}$  EA-QLDPC codes is obtained from a pair of $2\text{-D}$  classical QC-LDPC codes and is designed such that the unassisted part of the Tanner graph of the resulting EA-QLDPC code is free of $4$-cycles, while requiring only a single ebit to be shared across the quantum transceiver. The second family is constructed from a single $2\text{-D}$  classical QC-LDPC code whose Tanner graph is free from $4$-cycles. Moreover, the constructed EA-QLDPC codes inherit an erasure correction capability of $p \times p$, as the underlying classical codes possess the same erasure correction property.
\end{abstract}

\begin{IEEEkeywords}
 2-D LDPC codes, burst erasure correction, entanglement-assisted, Tanner graph, girth.
\end{IEEEkeywords}

\section{Introduction}
% Low-density parity-check (LDPC) codes, originally introduced by Gallager in the 1960s~\cite{gallager1963low}, are renowned for their near-optimal performance~\cite{mackay1996near} and efficient decoding algorithms. Moreover, structured LDPC codes are well-suited for hardware implementation, which makes them a key component in high-performance communication systems~\cite{wang2007memory}.
Following the success of one-dimensional (1-D) LDPC codes in classical communication and storage systems, higher-dimensional LDPC codes have also attracted significant research interest for the design of more powerful coding schemes. This growing interest is primarily driven by the stringent reliability requirements of advanced storage technologies such as shingled magnetic recording, two-dimensional (2-D) magnetic recording (TDMR)~\cite{wood2015two}, and three-dimensional NAND flash memories~\cite{kim2012three}, etc. In particular, a fundamental challenge in TDMR systems is that the dominant error mechanisms—including random errors, burst errors, and erasures—are inherently two-dimensional, arising from inter-track interference and inter-symbol interference.
 As a result, conventional $1$-D coding and decoding approaches are often inadequate and inefficient, motivating the development of native two-dimensional signal-processing and coding techniques attuned to the underlying channel structure~\cite{garani2018signal,chen2013joint}.

2-D burst errors and erasures can be handled efficiently using $2$-D LDPC codes, which naturally extend the sparse graphical structure of classical LDPC codes to higher dimensions. In this direction, several works have investigated the construction, encoding, and decoding of $2$-D LDPC codes, demonstrating their effectiveness in mitigating structured $2\text{-D}$ error patterns~\cite{mondal2021efficient2,kamabe2019burst,bharadwaj2025efficient}.  These developments motivate the design of novel 2-D classical LDPC codes as a foundation for constructing quantum LDPC codes capable of correcting $2\text{-D}$ errors and erasures under Markovian constraints, which are particularly relevant for quantum storage and quantum communication systems~\cite{fan2018construction,gu2024quantum}. Although some studies have explored 2-D quantum LDPC code constructions, important limitations remain. For example, the authors of~\cite{yang2025spatially} construct quantum LDPC codes from 2-D spatially coupled LDPC codes which are not tailored to mitigate 2-D burst errors and erasures. Other recent works, such as~\cite{berthusen2025toward} and~\cite{ruiz2025ldpc}, focus on quantum LDPC code design without explicitly addressing the correction of 2-D  burst error/erasure patterns. Consequently, existing approaches do not fully account for correlated error structures arising from Markovian effects. This gap motivates our work, in which we first construct novel 2-D classical QC-LDPC codes and then utilize them to develop 2-D quantum LDPC codes specifically designed to correct 2-D errors and erasures under realistic Markovian noise models.

In this work, we focus on the construction of $2\text{-D}$ entanglement-assisted (EA) quantum LDPC codes. We adopt the entanglement-assisted framework rather than the entanglement-unassisted setting to circumvent unavoidable cycles of length four that severely degrades the decoding performance. In other words, in the EA setting, the unassisted portion of the overall Tanner graph can be designed to be free of 4-cycles. Moreover, the entanglement-unassisted framework requires the underlying classical codes to satisfy the restrictive dual-containment constraint. In contrast, the availability of pre-shared ebits in the EA framework relaxes the stringent orthogonality requirements on the parity-check matrices, thereby allowing the construction of 2-D quantum LDPC codes from arbitrary classical codes. With the drive towards implementation of high-fidelity controlled quantum gates and caching of EPR pairs generated through a physical process, the EA construction of such 2-D QLDPC codes would be a promising direction. The main contributions of this work are summarized as follows.

\begin{enumerate}[itemsep=0pt,leftmargin=0pt]
\item[] (1) We first derive a general condition for the existence of $2g$-cycle in the Tanner graph of 2-D classical LDPC code. Depending on whether $p$ is an odd prime or a composite number, we construct two distinct families of $2$-D classical QC-LDPC codes with girth greater than $4$ by stacking $p \times p \times p$ tensors. Furthermore, using generalized Behrend sequences, we propose an additional family of $2$-D classical QC-LDPC codes with girth  greater than $6$, constructed via a similar tensor-stacking approach.  All the proposed code families achieve a burst erasure correction capability of at least $p \times p$.
\item[] (2) We then employ the constructed $2$-D classical QC-LDPC codes to develop two families of EA quantum LDPC codes. In the first family, two distinct $2$-D classical QC-LDPC codes are used to ensure that the unassisted portion of the overall Tanner graph is free of $4$-cycles, while requiring \textit{minimally} a single ebit. The second family is constructed using a single $2$-D classical QC-LDPC code, where the underlying classical code itself is free of $4$-cycles.
\end{enumerate}

This article is organized as follows. In Section~\ref{sec:classical constuction}, we first derive a general condition for the existence of $2g$-cycle in the Tanner graph of 2-D classical LDPC code. We then construct several families of 2-D classical QC-LDPC codes with burst erasure correction capabilities of at least $p\times p$. In Section~\ref{sec:quantumconstruction}, we construct two families of 2-D EA quantum LDPC codes: one obtained from a pair of distinct $2$-D classical LDPC codes and the other constructed using a single $2$-D classical LDPC code. Finally, Section~\ref{sec:conclusion} concludes the paper.

\section{2-D Classical LDPC Codes Constructions}\label{sec:classical constuction}
 Matcha et al.~\cite{matcha2018two} extended the principles of 1-D QC-LDPC code constructions to the 2-D setting, resulting in native 2-D LDPC codes capable of correcting 2-D burst errors/erasures. Following this work, several related constructions have been proposed, for example by Mondal and Garani~\cite{mondal2021efficient2} and Bharadwaj and Garani~\cite{bharadwaj2025efficient}. However, in all prior works, a general condition for the existence of $2g$-cycles is missing, and the constructions are restricted to girth $6$. To address this gap, we first briefly review the construction proposed in~\cite{matcha2018two}, and then describe the construction of the 2-D classical LDPC codes adopted in this work.
 
Before presenting the construction of 2-D classical LDPC codes, we introduce several definitions and notations that will be used throughout the paper.

\begin{figure}[htbp]
    \centering
    \begin{align*}
         \begin{subfigure}[t]{0.25\textwidth}
        \centering        \includegraphics[width=\linewidth]{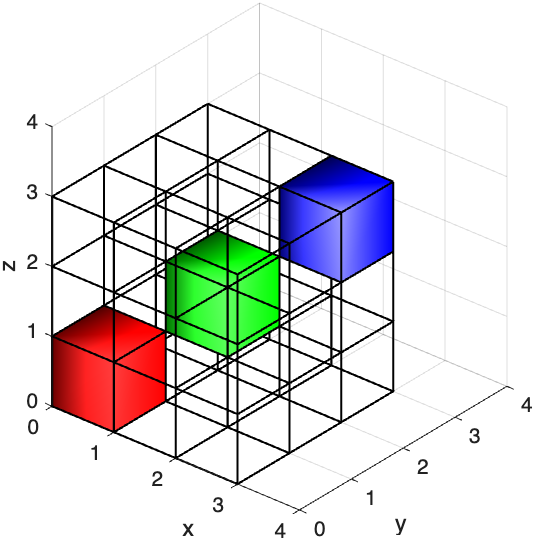}
        \caption{Identity Tensor $I_{3D}$.}
        \label{fig:sub1}
    \end{subfigure}&
    \begin{subfigure}[t]{0.25\textwidth}
        \centering        \includegraphics[width=\linewidth]{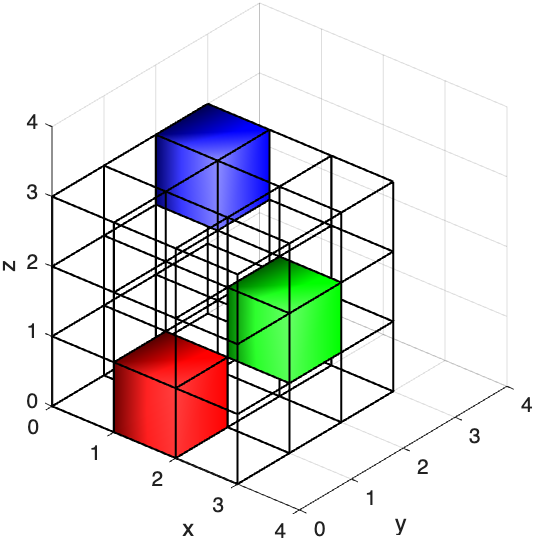}
        \caption{One $X$-shift, i.e., $P(I_{3D}).$}
        \label{fig:sub2}
    \end{subfigure}\\
    \bigskip\\
      \begin{subfigure}[t]{0.25\textwidth}
        \centering
        \includegraphics[width=\linewidth]{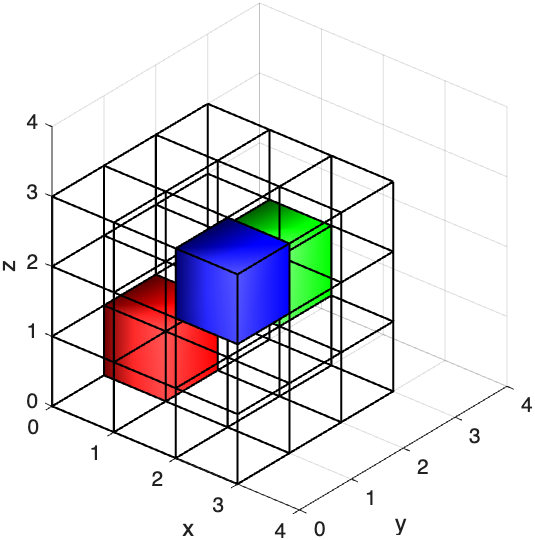}
        \caption{One $Y$-shift, i.e., $Q(I_{3D})$.}
        \label{fig:sub3}
    \end{subfigure}&
    \begin{subfigure}[t]{0.25\textwidth}
        \centering
        \includegraphics[width=\linewidth]{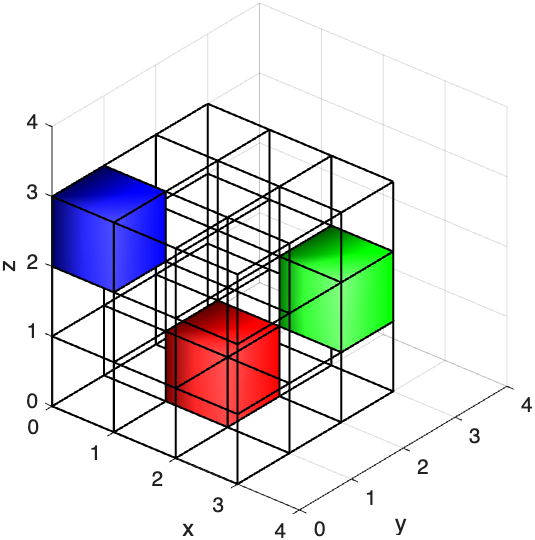}
        \caption{One $X$-shift and one $Y$-shift.}
        \label{fig:sub4}
    \end{subfigure}\\
    \end{align*}   
    \caption{Identity tensor and its various shifted versions, where the shifts are applied in the positive $i$- and $j$-directions.}
    \label{fig:identitytensor}
\end{figure}

Analogous to 1-D QC-LDPC array codes, where circulant permutation matrices are used to construct 1-D QC-LDPC codes~\cite{fossorier2004quasicyclic}, the construction of $2$-D classical LDPC codes employs an identity tensor, denoted by $I_{\text{3-D}}$, along with its shifted versions, where the identity tensor  $I_{\text{3-D}}$~(refer to Fig. \ref{fig:identitytensor}) of size $p \times p \times p$, where $p$ is a positive integer greater than~2, is defined as follows:
\begin{equation}\label{eq:identitytensor}
I_{3\text{-D}} = [I_{x,y,z}]_{x,y,z=0}^{p-1},\text{ where } I_{x,y,z} = \begin{cases}
1, & \text{if } x = y = z.\\
0, & \text{otherwise.}
\end{cases}
\end{equation}
 Let $\mathbb{Z}_{p}=\{0,1,2,\ldots,(p-1)\}$ be the ring of integers, for some integer $p>2$. We define the three permutation operators $P:\mathbb{Z}_{p} \to \mathbb{Z}_{p} $, $Q:\mathbb{Z}_{p} \to \mathbb{Z}_{p} $ and $R:\mathbb{Z}_{p} \to \mathbb{Z}_{p} $ as follows:
\begin{equation}
P(i) = Q(i) = R(i) = \begin{cases}
p-1, & i = 0, \\
i - 1, & \text{otherwise.}
\end{cases}
\end{equation}
Now, we apply the permutation operators  $P$, $Q$ and $R$ on the identity tensor $I_{3\text{-D}} = [I_{x,y,z}]_{x,y,z=0}^{p-1}$, as follows:
\begin{equation}\label{eq:permuationoperators}
\left.
\begin{aligned}
P(I_{3\text{-D}}) &= [I_{P(x),y,z}]_{x,y,z=0}^{p-1}, \\
Q(I_{3\text{-D}}) &= [I_{x, Q(y),z}]_{x,y,z=0}^{p-1}, \\
R(I_{3\text{-D}}) &= [I_{x,y,R(z)}]_{x,y,z=0}^{p-1}.
\end{aligned}
\right\}
\end{equation}
From~\eqref{eq:permuationoperators}, it is evident that the application of $P$, $Q$ and $R$ to the identity tensor $I_{3\text{-D}}$ results in cyclic shifts along the positive $x$-, $y$-, and $z$-axes, respectively. Moreover, we refer to the shifts along the $x$ -, $y$-, and $z$-axes as the $X$-, $Y$-, and $Z$-shifts, respectively. These shifts have been illustrated for the identity tensor $I_{3\text{-D}}$ of size $3 \times 3 \times 3$ in Fig.~\ref{fig:identitytensor}.

We now establish an important relation among the permutation operators $P$, $Q$ and $R$ acting on the identity tensor $I_{3\text{-D}}$, which is stated in the following lemma.

\begin{lemma}\label{le:permutationrelation} For $c\in\mathbb{Z}_{p}$, let $P$, $Q$ and $R$ be the permutation operators defined in~\eqref{eq:permuationoperators}. Then the following relations hold.
\begin{enumerate}
    \item $P^{c} \circ Q^{c}\bigl(I_{3\text{-D}}\bigr) = R^{\,p-c}\bigl(I_{3\text{-D}}\bigr).$
    \item $Q^{c} \circ R^{c}\bigl(I_{3\text{-D}}\bigr) = P^{\,p-c}\bigl(I_{3\text{-D}}\bigr).$
    \item $R^{c} \circ P^{c}\bigl(I_{3\text{-D}}\bigr) = Q^{\,p-c}\bigl(I_{3\text{-D}}\bigr).$
\end{enumerate}
\end{lemma}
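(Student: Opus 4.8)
The plan is to reduce the statement to modular arithmetic on the index set $\mathbb{Z}_p$ and then compare the two tensors entrywise. First I would note that each operator is a decrement modulo $p$: since $P(i)\equiv i-1\pmod p$ (and identically for $Q$ and $R$), iteration gives $P^{c}(i)\equiv i-c\pmod p$, and because $P^{p}=Q^{p}=R^{p}=\mathrm{id}$, we also have $R^{\,p-c}(i)\equiv i+c\pmod p$. The only convention to fix is the induced action on the tensor, read off from~\eqref{eq:permuationoperators}: applying $P^{c}$ relabels only the $x$-index, so that $\bigl(P^{c}(I_{3\text{-D}})\bigr)_{x,y,z}=I_{x-c,\,y,\,z}$, and $Q^{c}$, $R^{c}$ act analogously on the $y$- and $z$-indices.

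For part (1), I would compose the two relabelings---which act on disjoint indices and hence commute---to get $\bigl(P^{c}\circ Q^{c}(I_{3\text{-D}})\bigr)_{x,y,z}=I_{x-c,\,y-c,\,z}$. By the definition of the identity tensor in~\eqref{eq:identitytensor}, this entry is $1$ exactly when $x-c\equiv y-c\equiv z\pmod p$, i.e.\ when $x=y$ and $z\equiv x-c\pmod p$. On the right-hand side, $R^{\,p-c}$ increments the $z$-index, giving $\bigl(R^{\,p-c}(I_{3\text{-D}})\bigr)_{x,y,z}=I_{x,\,y,\,z+c}$, whose support is $x=y=z+c$, i.e.\ the same condition $x=y$ and $z\equiv x-c\pmod p$. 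Since both tensors are $0/1$-valued and share the same support, they coincide, proving (1).

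Parts (2) and (3) would then follow from the cyclic symmetry of the construction under the simultaneous relabeling $(x,y,z)\mapsto(y,z,x)$ together with $(P,Q,R)\mapsto(Q,R,P)$, which fixes $I_{3\text{-D}}$; alternatively each is verified by the identical entrywise computation. I do not expect a genuine obstacle, as this is essentially a bookkeeping identity; the one place to stay careful is the bookkeeping itself---correctly tracking that $R^{\,p-c}$ and its cyclic analogues realize \emph{addition} rather than subtraction modulo $p$ (via $R^{p}=\mathrm{id}$), and that the operator composition shifts the indices in the expected order. Once these conventions are pinned down, the verification is immediate.
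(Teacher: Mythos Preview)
Your proposal is correct and follows essentially the same approach as the paper: an entrywise comparison showing that both $P^{c}\circ Q^{c}(I_{3\text{-D}})$ and $R^{\,p-c}(I_{3\text{-D}})$ have support exactly $\{x=y=z+c\}$, with the remaining parts handled by analogy. Your explicit invocation of the cyclic symmetry $(x,y,z)\mapsto(y,z,x)$, $(P,Q,R)\mapsto(Q,R,P)$ to deduce (2) and (3) is a slightly cleaner packaging than the paper's ``analogously,'' but the substance is identical.
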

\begin{proof} We prove the first statement; the remaining statements follow analogously. To prove that $P^{c}\circ Q^{c}\bigl(I_{3\text{-D}}\bigr) = R^{\,p-c}\bigl(I_{3\text{-D}}\bigr),$ it is sufficient to verify that the two tensors coincide entrywise.

Applying the permutation operator $Q^{c}$ shifts the identity tensor along the $y$-axis by $c$, and subsequently applying $P^{c}$ shifts it along the $x$-axis by $c$. Hence,
\begin{equation*}
    P^{c} \circ Q^{c}(I_{3\text{-D}})_{x,y,z}
= I_{x-c,\,y-c,\,z}.
\end{equation*}
Therefore, by equation \eqref{eq:identitytensor}, we have
\begin{equation*}
    Q^{c} \circ P^{c}(I_{3\text{-D}})_{x,y,z}=1
\text{ if and only if } x-c=y-c=z.
\end{equation*}
This condition is equivalent to $x=y=z+c$ in $\mathbb{Z}_{p}$.

On the other hand, applying the operator $R^{p-c}$ corresponds to a shift of $p-c=-c \pmod{p}$ along the $z$-axis, and thus
\begin{equation*}
    R^{p-c}(I_{3\text{-D}})_{x,y,z}
= I_{x,\,y,\,z+c}.
\end{equation*}
Consequently, $R^{p-c}(I_{3\text{-D}})_{x,y,z}=1\iff$  $x=y=z+c.$
% \begin{equation*}
%     R^{\,p-a}(I_{3\text{-D}})_{i,j,k}=1
% \quad \text{if and only if} \quad i=j=k+a.
% \end{equation*}
Since both tensors are equal to \(1\) under the same condition and zero otherwise, we conclude that $Q^{a} \circ P^{c}\bigl(I_{3\text{-D}}\bigr) = R^{p-c}\bigl(I_{3\text{-D}}\bigr).$ This completes the proof. \end{proof}
 Next, we show that any two of the permutation operators $P$, $Q$  and $R$ are sufficient to generate the action of all three, as stated in the following lemma.
 \begin{lemma}\label{le:permutationrelation2}
 For $a,b,c\in\mathbb{Z}_{p}$, let   $P$, $Q$ and $R$ be the permutation operators defined on the identity tensor $I_{3\text{-D}}$. Then we have
 \begin{enumerate}
     \item  $P^{a} \circ Q^{b}\circ R^{c}  (I_{3\text{-D}})= P^{(p+a-c)} \circ Q^{(p+b-c)} (I_{3\text{-D}}),$
    \item  $P^{a} \circ Q^{b}\circ R^{c}  (I_{3\text{-D}})= Q^{(p+b-a)} \circ R^{(p+c-a)} (I_{3\text{-D}}),$
    \item  $P^{a} \circ Q^{b}\circ R^{c}  (I_{3\text{-D}})= R^{(p+c-b)} \circ P^{(p+a-b)} (I_{3\text{-D}}).$
 \end{enumerate}  
 \end{lemma}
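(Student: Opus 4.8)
The plan is to derive all three identities as algebraic consequences of Lemma~\ref{le:permutationrelation}, rather than re-verifying each one entrywise. Two structural facts about the operators make this possible, and I would state both explicitly at the outset: first, since $P$, $Q$, and $R$ act on mutually independent coordinate axes, they commute with one another; second, since each axis has length $p$, each operator has order $p$, so that $P^{p}=Q^{p}=R^{p}=\mathrm{id}$ on $I_{3\text{-D}}$.

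Next, I would reformulate Lemma~\ref{le:permutationrelation} in operator form. Its first statement, $P^{c}\circ Q^{c}(I_{3\text{-D}})=R^{\,p-c}(I_{3\text{-D}})$, together with $R^{p}=\mathrm{id}$, says that $R^{c}=P^{\,p-c}\circ Q^{\,p-c}$ as an action on $I_{3\text{-D}}$; symmetrically, the second statement yields $P^{c}=Q^{\,p-c}\circ R^{\,p-c}$ and the third yields $Q^{c}=R^{\,p-c}\circ P^{\,p-c}$. Each of these lets one eliminate a chosen operator in favour of the other two.

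To prove statement~(1), I would substitute $R^{c}=P^{\,p-c}\circ Q^{\,p-c}$ into $P^{a}\circ Q^{b}\circ R^{c}$ and use commutativity to collect the $P$- and $Q$-powers, obtaining $P^{\,a+p-c}\circ Q^{\,b+p-c}=P^{\,p+a-c}\circ Q^{\,p+b-c}$, where exponents are read in $\mathbb{Z}_{p}$. Statement~(2) follows in the same way by substituting $P^{a}=Q^{\,p-a}\circ R^{\,p-a}$ and collecting $Q$- and $R$-powers, and statement~(3) by substituting $Q^{b}=R^{\,p-b}\circ P^{\,p-b}$ and collecting $R$- and $P$-powers.

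The only real care needed is bookkeeping of the exponents in $\mathbb{Z}_{p}$: the quantities $p+a-c$, $p+b-c$, and so on are exactly the representatives of $a-c$, $b-c$ modulo $p$, so I would note once that $P^{p}=\mathrm{id}$ justifies adding $p$ freely to an exponent. This is the step most prone to sign slips, but it is routine. Should a self-contained argument be preferred, each identity could instead be checked entrywise exactly as in the proof of Lemma~\ref{le:permutationrelation}, by observing that $P^{a}\circ Q^{b}\circ R^{c}(I_{3\text{-D}})_{x,y,z}=I_{x-a,\,y-b,\,z-c}$ equals $1$ iff $x-a=y-b=z-c$ in $\mathbb{Z}_{p}$, and comparing this with the analogous condition obtained for each right-hand side.
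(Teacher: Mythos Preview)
Your proposal is correct and follows essentially the same route as the paper: both arguments invoke Lemma~\ref{le:permutationrelation} (in the form $R^{c}=P^{\,p-c}\circ Q^{\,p-c}$ on $I_{3\text{-D}}$) together with commutativity of the operators, substitute into $P^{a}\circ Q^{b}\circ R^{c}(I_{3\text{-D}})$, and collect exponents. The paper proves only statement~(1) and declares the rest analogous, exactly as you suggest.
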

 \begin{proof} We provide the proof of the first statement; the remaining statements can be proved in a similar manner. By applying Lemma~\ref{le:permutationrelation} together with the commutative property of the permutation operators, we obtain
    \begin{align*}
P^{a} \circ Q^{b}\circ R^{c}  (I_{3\text{-D}}) &= P^{a} \circ Q^{b}\circ R^{p-(p-c)}  (I_{3\text{-D}}) \\
&= P^{a} \circ Q^{b} \circ P^{(p-c)} \circ Q^{(p-c)} (I_{3\text{-D}})\\
&= P^{(p+a-c)} \circ Q^{(p+b-c)} (I_{3\text{-D}}).
\end{align*} This completes the proof.
 \end{proof}
 % \begin{figure}
 %     \centering
 %     \includegraphics[width=0.9\linewidth]{cuboid.png}
 %     \caption{Parity-check tensor for $h=5$, $c=4$ and $b=3$.}
 %     \label{fig:cuboid}
 % \end{figure}
In light of Lemma~\ref{le:permutationrelation2}, we construct 2-D classical LDPC codes by choosing permutations \( P \) and \( Q \) that implement circular shifts along the positive  \( x \)- and \( y \)-axes, respectively. Based on this construction, we introduce the following assumptions and terminology to define the parity-check tensor \( H_{2\text{-D}} \), which is obtained as a stack of tensors corresponding to the 2-D classical LDPC codes.

\begin{enumerate}[leftmargin=*]
\item  We introduce a new set of coordinate axes, denoted by $i$, $j$ and $k$, aligned with the $x$-, $y$-, and $z$-directions, respectively. The indices $(x,y,z)$ are used to label the positions of individual small cubes within the tensor, while the indices $(i,j,k)$ refer to the  coordinates of a tensor as a whole. This distinction allows us to clearly differentiate between local indexing within the tensor and global indexing of the tensor in the parity-check tensor $H_{2\text{-D}}$.
    \item For given size of identity tensor, i.e., $p\times p\times p$, we first define the X- and Y-shifts of a tensor located at an arbitrary position $(i,j,k)$ along the $x$- and $y$-directions. The X- and Y-shifts are denoted by $a(i,j,k)$ and $b(i,j,k)$, respectively.
    \item For given positive integers $c$, $b$ and $h$ such that $p\mid h$, tensors with the prescribed shifts are stacked along the $i$-, $j$-, and $k$-directions. Within this stack, the $(i,j,k)$-th tensor, with indices satisfying $0 \leq i \leq c$, $0 \leq j \leq b$, and $0 \leq k \leq h$, is constructed as
    \begin{equation*}
        P^{a(i,j,k)} \circ Q^{b(i,j,k)}\bigl(I_{3\text{-D}}\bigr).
    \end{equation*}
    \item A layer of small cubes within the tensors, taken along the $xy$-plane, is referred to as a horizontal layer (refer to Fig.~\ref{fig:layer}) and will be denoted by $\ell$.
     \item A layer of small cubes within the tensors, taken along the $xz$-plane, is referred to as a vertical layer and will be denoted by $v$.
    \item During this stacking process, the first $p^{2}$ horizontal layers are arranged such that the colored small cubes cover all positions in the $xy$-plane, for $0\leq x\leq cp-1$ and $0\leq y\leq bp-1$. Each subsequent group of $p^{2}$ horizontal layers satisfies the same coverage property, and this pattern continues throughout the stacking.
    \item 
 Two horizontal layers $\ell_1$ and $\ell_2$ of $H_{2\text{-D}}$ are said to belong to the same horizontal block-layer if $\left\lfloor \frac{\ell_1}{p^{2}} \right\rfloor = \left\lfloor \frac{\ell_2}{p^{2}} \right\rfloor .$  
 \item Tensors sharing the same $i$ and $j$ indices are said to belong to the same block-column.
\item A closed path of length $2g$ in the parity-check tensor $H_{2\text{-D}}$ is defined as a sequence of index triplets $(\,\cdot,\cdot,\cdot\,)$,
where the first two coordinates correspond to the  block-column indices and the third coordinate corresponds to the horizontal block-layer index, arranged so as to form a closed alternating path of length $2g$ as follows:
\begin{equation*}
(i_{0},j_{0},k_{0}),(i_{1},j_{1},k_{0});
(i_{1},j_{1},k_{1}),(i_{2},j_{2},k_{1});\ldots,
\end{equation*}
\begin{equation*}
(i_{g-2},j_{g-2},k_{g-2}),(i_{g-1},j_{g-1},k_{g-2});
\end{equation*}
\begin{equation*}
(i_{g-1},j_{g-1},k_{g-1}),(i_{0},j_{0},k_{g-1}),
\end{equation*}
where $(i_\ell, j_\ell) \neq (i_{\ell+1}, j_{\ell+1})$ and $k_\ell \neq k_{\ell+1}$ for $\ell = 0,1,\ldots,g-2$, and
$(i_{g-1},j_{g-1}) \neq (i_{0},j_{0})$ with $k_{g-1} \neq k_{0}$.
\end{enumerate} 
To analyze the existence of cycles in the Tanner graph of $H_{2\text{-D}}$, we  unfold the 3-D parity-check tensor $H_{2\text{-D}}$ into a $2\text{-D}$ parity-check matrix $H_{1\text{-D}}$ of size $ph \times bcp^{2}$. The unfolding is performed layer by layer along the vertical dimension: the first $cp$ columns of $H_{1\text{-D}}$ correspond to the first vertical layer of $H_{2\text{-D}}$, the next $cp$ columns correspond to the second vertical layer, and so on. Mathematically, the parity-check tensor $H_{2\text{-D}}$ is unfolded into the matrix $H_{1\text{-D}}$ such that the $(i,j)$-th entry of $H_{1\text{-D}}$ is the $(j \pmod{p},\left\lfloor\frac{j}{p}\right\rfloor,i)$-th entry in $H_{2\text{-D}}$.

Now, consider a closed path of length $2g$ in the parity-check tensor $H_{2\text{-D}}$, as described under assumption~(9). For this closed path, let $\{\bar{k}_i\}_{i=0}^{g-1}$ denote the collection of $k$-indices, where each $\bar{k}_i$ represents the $k$-coordinate of a tensor located in the $k_i$-th horizontal block-layer and satisfies the conditions required to connect the corresponding cycle segments. Under these conditions, we arrive at the following theorem, which establishes the existence of cycles of length $2g$ in the Tanner graph associated with $H_{2\text{-D}}$ along the specified closed path.
\begin{theorem}[\textbf{Condition for a $2g$-cycle}]
\label{thm:2D-2k-cycle}
Let $H_{2\text{-D}}$ be the parity-check tensor composed of permutation tensors of size $p\times p\times p$. Assume that the permutation tensor located at position $(i,j,k)$ is specified
by the shift functions $a(i,j,k)$ and $b(i,j,k)$. Consider the arbitrary closed path of length $2g$ as described under assumption~(9), then the Tanner graph of $H_{2\text{-D}}$ contains a cycle of length $2g$ along the considered closed path if and only if there exists a collection of $k$-indices $\{\Bar{k}_{i}\}_{i=0}^{g-1}$, where each $\bar{k}_i$ is the $k$-coordinate of a tensor located in the $k_i\text{-th}$ horizontal block-layer, satisfying the conditions required to connect the corresponding cycle segments, and
\begin{align*}
\sum_{m=0}^{g-1}
\Bigl(
a(i_{m},j_{m},\Bar{k}_{m})
-
a(i_{m+1},j_{m+1},\Bar{k}_{m})
\Bigr)
&= 0 \pmod p,\\
\sum_{m=0}^{g-1}
\Bigl(
b(i_{m},j_{m},\Bar{k}_{m})
-
b(i_{m+1},j_{m+1},\Bar{k}_{m})
\Bigr)
&=0 \pmod p, 
\end{align*}
where $(i_g,j_g)=(i_0,j_0)$.
\end{theorem}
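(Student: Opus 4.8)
The plan is to translate the geometric condition on $2g$-cycles into an arithmetic telescoping identity for the shift functions $a$ and $b$, by tracking how the \emph{local} coordinates of the nodes evolve as one walks along the cycle. The first step is to fix the fine-grained adjacency rule inside a single permutation tensor. The entrywise computation used in the proof of Lemma~\ref{le:permutationrelation} shows that $P^{a}\circ Q^{b}(I_{3\text{-D}})$ carries exactly one nonzero entry per horizontal layer, situated at the local position satisfying $x = z+a$ and $y = z+b \pmod p$. Writing $(x_m,y_m)$ for the local coordinates of a variable node $v_m$ in block-column $(i_m,j_m)$ and $z_m$ for the local coordinate of a check node $c_m$ in the tensor-layer $\bar{k}_m$, the segment of the path in block-layer $k_m$ contributes two edges: the edge $v_m$--$c_m$ (living in the tensor $(i_m,j_m,\bar{k}_m)$) exists precisely when
\begin{equation*}
x_m \equiv z_m + a(i_m,j_m,\bar{k}_m),\quad y_m \equiv z_m + b(i_m,j_m,\bar{k}_m)\pmod p,
\end{equation*}
while the edge $c_m$--$v_{m+1}$ (living in the tensor $(i_{m+1},j_{m+1},\bar{k}_m)$) exists precisely when
\begin{equation*}
x_{m+1} \equiv z_m + a(i_{m+1},j_{m+1},\bar{k}_m),\quad y_{m+1} \equiv z_m + b(i_{m+1},j_{m+1},\bar{k}_m)\pmod p.
\end{equation*}

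For the forward (``only if'') direction I would assume the cycle exists and read off these two displays for each $m=0,\dots,g-1$. Subtracting the first relation from the second eliminates the shared check coordinate $z_m$ and leaves $x_m-x_{m+1}\equiv a(i_m,j_m,\bar{k}_m)-a(i_{m+1},j_{m+1},\bar{k}_m)$ together with the analogous identity in $y$ and $b$. Summing over $m$ with indices read modulo $g$, so that the walk closes at $v_0$ and $(x_g,y_g)=(x_0,y_0)$, the left-hand sides telescope to zero, which is exactly the pair of congruences asserted in the theorem.

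For the reverse (``if'') direction I would run this computation backwards: assuming the two sum conditions, I fix $z_0$ arbitrarily and propagate the coordinates $z_m$ and $(x_m,y_m)$ around the path using the adjacency relations. The delicate point is that the single variable node $v_{m+1}$ must be adjacent to checks sitting in the two distinct tensor-layers $\bar{k}_m$ and $\bar{k}_{m+1}$; because each permutation tensor is supported on a shifted diagonal, this is possible only when the difference $a-b$ agrees across the two layers at block-column $(i_{m+1},j_{m+1})$. These are the ``conditions required to connect the corresponding cycle segments'' mentioned in the statement, and under them the propagation is well defined. The two sum congruences are then exactly what force the propagated coordinates to close up consistently after one full traversal, i.e.\ $z_g=z_0$ and $(x_g,y_g)=(x_0,y_0)$. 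Finally I would use the inequalities $(i_\ell,j_\ell)\neq(i_{\ell+1},j_{\ell+1})$ and $k_\ell\neq k_{\ell+1}$ from assumption~(9) to guarantee that the $2g$ constructed nodes are pairwise distinct, so that the resulting closed walk is a genuine cycle and not a degenerate backtracking path.

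I expect the main obstacle to be this reverse direction: one must simultaneously maintain consistency of the $x$- and $y$-propagations (the $a$--$b$ compatibility across adjacent tensor-layers) and verify distinctness of the nodes. By contrast, the forward direction reduces to the telescoping identity and is routine once the local adjacency rule is in place.
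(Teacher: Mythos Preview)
Your proposal is correct and follows essentially the same approach as the paper: both arguments write down the local adjacency rule $x\equiv z+a$, $y\equiv z+b\pmod p$ inside each permutation tensor, list the resulting $2g$ pairs of congruences along the closed path, eliminate the check-node coordinates $z_m$, and telescope to obtain the two sum conditions. If anything, your write-up is slightly more explicit than the paper's: you identify the ``conditions required to connect the corresponding cycle segments'' concretely as the $a{-}b$ compatibility across adjacent tensor-layers, and you flag the node-distinctness issue, both of which the paper leaves implicit.
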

\begin{proof} Throughout the proof, $(i,j,k)$ denote the indices of the tensor, whereas $(x,y,z)$ denote the local indices of the small cubes, i.e., cubes within the tensor. Consider the arbitrary closed path of length $2g$ as follows:
    \begin{equation*}
(i_{0},j_{0},k_{0}),(i_{1},j_{1},k_{0});
(i_{1},j_{1},k_{1}),(i_{2},j_{2},k_{1});\ldots,
\end{equation*}
\begin{equation*}
(i_{g-2},j_{g-2},k_{g-2}),(i_{g-1},j_{g-1},k_{g-2});
\end{equation*}
\begin{equation*}
(i_{g-1},j_{g-1},k_{g-1}),(i_{0},j_{0},k_{g-1}),
\end{equation*}
where $(i_\ell, j_\ell) \neq (i_{\ell+1}, j_{\ell+1})$ and $k_\ell \neq k_{\ell+1}$ for $\ell = 0,1,2,\ldots,g-2$, and
$(i_{g-1},j_{g-1}) \neq (i_{0},j_{0})$ with $k_{g-1} \neq k_{0}$. Next, consider the following sequence of local indices corresponding to the nonzero entries along the given closed path:
 \begin{equation*}
(x_{0},y_{0},z_{0}),(x_{1},y_{1},z_{0});
(x_{1},y_{1},z_{1}),(x_{2},y_{2},z_{1});\ldots,
\end{equation*}
\begin{equation*}
(x_{g-2},y_{g-2},z_{g-2}),(x_{g-1},y_{g-1},z_{g-2});
\end{equation*}
\begin{equation*}
(x_{g-1},y_{g-1},z_{g-1}),(x_{g},y_{g},z_{g-1}),
\end{equation*}
where $0 \le x_i, y_i, z_i \le p-1$ for all $i, j, k$, since without loss of generality these indices are taken to lie within the tensor, which is sufficient to prove our result. It is straightforward to observe that, for any tensor located at position $(i,j,k)$, the $x$  coordinate of a nonzero entry in the $z$-th layer of the tensor, where $0 \le z \le p-1$, is given by
\begin{equation}\label{eq:generalicase}
    x=z+a(i,j,k)\pmod{p}.
\end{equation}
Similarly, for any tensor located at position $(i,j,k)$, the $y$-coordinate of a nonzero entry in the $z$-th layer of the tensor, where $0 \le z \le p-1$, is given by
\begin{equation}\label{eq:generaljcase}
    y=z+b(i,j,k)\pmod{p}.
\end{equation}
Thus, for the existence of a $2g$-cycle corresponding to the given closed path, it suffices to show that $(x_g,y_g) = (x_0,y_0)$. Here, $(x_g,y_g)$ denotes the $(x,y)$-coordinates of a nonzero entry in the layer $z = z_{g-1}$ of the tensor at position $(i_0,j_0,\Bar{k}_{g-1})$, where $\Bar{k}_{g-1}$ is the $k$-th coordinate of some tensor in the $k_{g-1}$ horizontal block-layer while $(x_0,y_0)$ denotes the $(x,y)$-coordinates of a nonzero entry in the layer $z = z_{0}$ of the tensor corresponding to $(i_0,j_0,\Bar{k}_{0})$, where $\Bar{k}_{0}$ is the $k$-th coordinate of some tensor in the $k_{0}$ horizontal block-layer. Now, by equation  \eqref{eq:generalicase}, for each index triplet along the closed path, the $x$-coordinates of the nonzero entries are given by the following set of
equations: 
\begin{align*}
   x_{0}&=z_{0}+a(i_{0},j_{0},\Bar{k}_{0})\pmod{p},\\ 
   x_{1}&=z_{0}+a(i_{1},j_{1},\Bar{k}_{0})\pmod{p},\\ 
    x_{1}&=z_{1}+a(i_{1},j_{1},\Bar{k}_{1})\pmod{p},\\ 
   x_{2}&=z_{1}+a(i_{2},j_{2},\Bar{k}_{1})\pmod{p},\\ 
   \vdots&\qquad\qquad\vdots\\
   x_{g-1}&=z_{g-1}+a(i_{g-1},j_{g-1},\Bar{k}_{g-1})\pmod{p},\\ 
   x_{g}&=z_{g-1}+a(i_{0},j_{0},\Bar{k}_{g-1})\pmod{p},
\end{align*}
where each $\bar{k}_i$ is the $k$-coordinate of a tensor located in the $k_i\text{-th}$ horizontal block-layer, satisfying the conditions required to connect the corresponding cycle segments.
By the above set of equations, we have the following equation under modulo $p$:
\begin{equation}\label{eq:icase}
    x_{g}=x_{0}+\sum_{m=0}^{g-1}
\Bigl(
a(i_{m},j_{m},\Bar{k}_{m})
-
a(i_{m+1},j_{m+1},\Bar{k}_{m})
\Bigr),
\end{equation}
where $(i_{g},j_{g})=(i_{0},j_{0})$.
 Now, by applying the same reasoning to the $y$-coordinates of the nonzero entries, we obtain the following equation under modulo ${p}$:
\begin{equation}\label{eq:jcase}
    y_{g}=y_{0}+\sum_{m=0}^{g-1}
\Bigl(
b(i_{m},j_{m},\Bar{k}_{m})
-
b(i_{m+1},j_{m+1},\Bar{k}_{m})
\Bigr),
\end{equation}
where $(i_{g},j_{g})=(i_{0},j_{0})$.
For the existence of a $2g$-cycle, it is necessary that $(x_{g},y_{g})=(x_{0},y_{0})$. 
By equations \eqref{eq:icase} and \eqref{eq:jcase}, this occurs if and only if the following two equations are simultaneously satisfied:
\begin{align*}
\sum_{m=0}^{g-1}
\Bigl(
a(i_{m},j_{m},\Bar{k}_{m})
-
a(i_{m+1},j_{m+1},\Bar{k}_{m})
\Bigr)
&= 0 \pmod p,\\
\sum_{m=0}^{g-1}
\Bigl(
b(i_{m},j_{m},\Bar{k}_{m})
-
b(i_{m+1},j_{m+1},\Bar{k}_{m})
\Bigr)
&=0 \pmod p, 
\end{align*}
where $(i_g,j_g)=(i_0,j_0)$. 
This completes the proof.
\end{proof}
\begin{figure}[t]
    \centering    \includegraphics[width=0.85\linewidth]{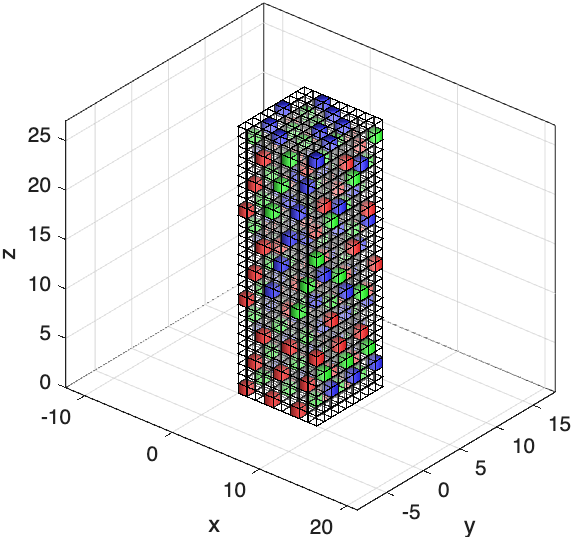}
    \caption{Parity-check tensor $H_{2\text{-D}}$ for $p=3$, where the shifts used are the same as those defined in~\eqref{shifts} with $\phi(i)=i$, and $\psi(j)=j$ and $\eta(k)=k$.}
    \label{fig:stack}
\end{figure}
Henceforth, this section is divided into two subsections. The first subsection focuses on the construction of 2-D classical QC-LDPC codes with girth greater than $4$, while the second subsection addresses the construction of 2-D classical QC-LDPC codes with girth greater than $6$.
\subsection{2-D Classical QC-LDPC Codes with Girth Greater than~$4$}
Depending on whether $p$ is an odd prime or composite number, in this subsection, we construct two families of $2\text{-D}$ classical QC-LDPC codes by stacking permutation tensors of size $p \times p \times p$.

Let $p$ be an odd prime, and this assumption applies throughout this subsection unless stated otherwise. We construct the $2$-D QC-LDPC parity-check tensor $H_{2\text{-D}}$ by stacking $p \times p \times p^{2}$ permutation  tensors, each of size $p \times p \times p$, along the $i$-, $j$-, and $k$-directions, respectively. These tensors are generated by applying various combinations of the shift operators $P$ and $Q$ to the identity tensor $I_{3\text{-D}}$, corresponding to $X$-shift and $Y$-shifts, respectively. While constructing $H_{2\text{-D}}$, the $(i, j, k)$th permutation tensor is given by $P^{a(i,j,k)} \circ Q^{b(i,j,k)} (I_{3\text{-D}})$, where $a(i, j, k)$ and $b(i, j, k)$ are chosen as follows:
\begin{equation}\label{shifts}
   \left.
\begin{aligned}
a(i, j, k) = \bmod\left(\eta(k) + \left\lfloor \frac{k}{p} \right\rfloor (\phi(i) + \psi(j)), p\right), \\
b(i, j, k)= \bmod\left(\left \lfloor \frac{k}{p} \right\rfloor \phi(i), p\right),
\end{aligned}
\right\} 
\end{equation}
where $\phi,\psi:\mathbb{Z}_{p}\to\mathbb{Z}_{p}$ are bijective mappings, and the mapping $\eta:\mathbb{Z}_{p^{2}}\to\mathbb{Z}_{p}$ is  defined as follows:
\begin{equation*}
    \eta(k)=\begin{cases}
        \zeta_{1}(k), &\text{ for } 0\leq k\leq (p-1);\\
         \zeta_{2}(\bmod{(k,p)}), &\text{ for } p\leq k\leq (2p-1);\\
           \qquad \vdots &\qquad\vdots\\
           \zeta_{p}(\bmod{(k,p)}), &\text{ for } p\leq k\leq (p^{2}-1);
    \end{cases}
\end{equation*}
where $\zeta_{i}:\mathbb{Z}_{p}\to\mathbb{Z}_{p}$ is a bijective mapping for each $i$.
\begin{figure}
    \centering
    \includegraphics[width=0.65\linewidth]{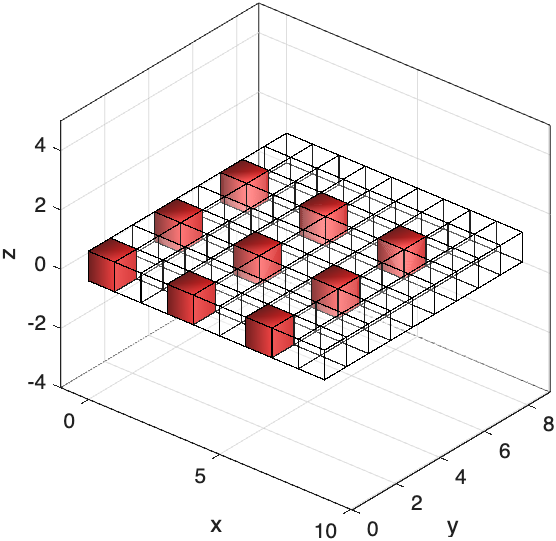}
    \caption{First layer in parity-check tensor $H_{2\text{-D}}$ for $p=3$, constructed using the specific shifts with $\phi(i)=i$, $\psi(j)=j$, and $\eta(k)=k \pmod{p}$ for all $i,j,k$.}
    \label{fig:layer}
\end{figure}

 An example of the parity-check tensor $H_{2\text{-D}}$ for \(p=3\), constructed using the specific shifts with $\phi(i)=i$, $\psi(j)=j$, and $\eta(k)=k \pmod{p}$ for all $i,j,k$, is shown in Fig.~\ref{fig:stack}. 
 
 Let \(H_{2\text{-D}}\) denote the parity-check tensor constructed using the shifts defined in~\eqref{shifts}. Then the corresponding Tanner graph is free of \(4\)-cycles as stated in the following lemma.
\begin{lemma}\label{le:4-cyclefree}
    Let $H_{2\text{-D}}$ be the parity-check tensor stack of $p \times p \times p^{2}$ permutation  tensors, each of size $p \times p \times p$, along the $i$-, $j$-, and $k$-directions, respectively, constructed with the shifts defined in \eqref{shifts}. Then the Tanner graph associated with $H_{2\text{-D}}$ is free from $4$-cycles.
\end{lemma}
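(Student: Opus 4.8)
The plan is to invoke Theorem~\ref{thm:2D-2k-cycle} with $g=2$ and show that the resulting system of congruences cannot be satisfied along any admissible closed $4$-path, so no $4$-cycle can exist. A $4$-cycle corresponds to a closed path $(i_0,j_0,k_0),(i_1,j_1,k_0);(i_1,j_1,k_1),(i_0,j_0,k_1)$ with $(i_0,j_0)\neq(i_1,j_1)$ and $k_0\neq k_1$, where $k_0,k_1$ are horizontal block-layer indices. By the theorem, such a cycle exists if and only if there are tensor $k$-coordinates $\bar{k}_0,\bar{k}_1$ lying in block-layers $k_0,k_1$ respectively, so that $\lfloor\bar{k}_0/p\rfloor=k_0$ and $\lfloor\bar{k}_1/p\rfloor=k_1$, satisfying the two telescoping congruences for $a(\cdot)$ and $b(\cdot)$.

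First I would substitute the explicit shifts from~\eqref{shifts}. The key observation is that in each summand $a(i_m,j_m,\bar{k}_m)-a(i_{m+1},j_{m+1},\bar{k}_m)$ the two evaluations share the same third argument $\bar{k}_m$; hence the additive term $\eta(\bar{k}_m)$ cancels termwise, and the common scalar $\lfloor\bar{k}_m/p\rfloor$ is carried through the difference of the $(\phi+\psi)$-parts unchanged. Writing $q_m:=\lfloor\bar{k}_m/p\rfloor=k_m$, the $a$-condition collapses to $(k_0-k_1)\bigl[(\phi(i_0)-\phi(i_1))+(\psi(j_0)-\psi(j_1))\bigr]\equiv 0\pmod p$, while the $b$-condition, which carries no $\eta$ term at all, collapses to $(k_0-k_1)\bigl(\phi(i_0)-\phi(i_1)\bigr)\equiv 0\pmod p$.

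The decisive step uses that $p$ is an odd prime. Since $k_0,k_1\in\{0,\dots,p-1\}$ and $k_0\neq k_1$, the factor $(k_0-k_1)$ is a unit modulo $p$ and may be cancelled. The $b$-congruence then forces $\phi(i_0)=\phi(i_1)$, and substituting this back into the $a$-congruence forces $\psi(j_0)=\psi(j_1)$. Because $\phi$ and $\psi$ are bijections on $\mathbb{Z}_p$, we conclude $i_0=i_1$ and $j_0=j_1$, that is $(i_0,j_0)=(i_1,j_1)$, contradicting the defining requirement $(i_0,j_0)\neq(i_1,j_1)$ of the closed path. Hence no admissible $4$-path can close into a cycle, and the Tanner graph is free of $4$-cycles.

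I expect the only genuinely delicate point to be the bookkeeping that identifies the quotient $\lfloor\bar{k}_m/p\rfloor$ of the tensor $k$-coordinate with the block-layer index $k_m$, together with the verification that the $\eta$ contributions vanish termwise rather than merely over the whole sum; once both conditions are factored as $(k_0-k_1)(\cdots)$, primality of $p$ and the bijectivity of $\phi,\psi$ close the argument immediately. I would state this cancellation cleanly, since the same factorization structure is likely to recur when analysing the higher-girth constructions.
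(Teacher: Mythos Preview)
Your proposal is correct and follows essentially the same route as the paper's proof: invoke Theorem~\ref{thm:2D-2k-cycle} for $g=2$, substitute the shifts from~\eqref{shifts} so that the $\eta(\bar{k}_m)$ terms cancel within each difference, factor both congruences as $(k_0-k_1)\cdot(\cdots)\equiv 0\pmod p$, and then use primality of $p$ together with the bijectivity of $\phi,\psi$ to force $(i_0,j_0)=(i_1,j_1)$. Your explicit identification $\lfloor\bar{k}_m/p\rfloor=k_m$ and the remark that the $\eta$-cancellation is termwise are slightly more detailed than the paper's version, but the argument is the same.
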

\begin{proof}
    Consider arbitrary horizontal block-layer $k_{0}$ and $k_{1}$, where $0\leq k_{0}, k_{1}\leq (p-1)$. To check the existence of cycle of length $4$, consider the following arbitrary closed path of tensors indices:
     \begin{equation*}
(i_{0},j_{0},k_{0}),(i_{1},j_{1},k_{0});
(i_{1},j_{1},k_{1}),(i_{0},j_{0},k_{1}),
\end{equation*}
where $i_s, j_t \in \{0,1,2,\ldots,(p-1)\}$ for all $s,t \in \{0,1\}$.
 Then, by Theorem~\ref{thm:2D-2k-cycle}, a $4$-cycle exists along the closed path
$(i_{0}, j_{0}, k_{0}), (i_{1}, j_{1}, k_{0}), (i_{1}, j_{1}, k_{1}), (i_{0}, j_{0}, k_{1})$
if and only if there exists a set of $k$-indices $\{\bar{k}_{0}, \bar{k}_{1}\}$, where each $\bar{k}_i$ is the $k$-coordinate of a tensor located in the $k_i\text{-th}$ horizontal block-layer, satisfying the conditions required to connect the corresponding cycle segments, and the following equations are simultaneously satisfied under modulo $p$:
\begin{equation*}
    a(i_{0},j_{0},\Bar{k}_{1})-a(i_{0},j_{0},\Bar{k}_{0})= a(i_{1},j_{1},\Bar{k}_{1})-a(i_{1},j_{1},\Bar{k}_{0}),
\end{equation*}
\begin{equation*}
    b(i_{0},j_{0},\Bar{k}_{1})-b(i_{0},j_{0},\Bar{k}_{0})= b(i_{1},j_{1},\Bar{k}_{1})-b(i_{1},j_{1},\Bar{k}_{0}).
\end{equation*}
Now, by substituting the shifts defined in \eqref{shifts} into the above equations, a $4$-cycle exists in the Tanner graph of $H_{2\text{-D}}$ if and only if the following equations hold:
\begin{equation*}
   \alpha((\phi(i_0)-\phi(i_{1}))+( \psi(j_0)-\psi(j_{1})))=0\pmod{p},
\end{equation*}
\begin{equation*}
   \alpha((\phi(i_0)-\phi(i_{1}))=0\pmod{p},
\end{equation*}
where $ \alpha=\left(\left\lfloor \frac{\Bar{k}_{1}}{p}\right\rfloor-\left\lfloor \frac{\Bar{k}_{0}}{p}\right\rfloor\right)\neq 0$ $\pmod{p}$. Therefore, for the existence of $4$-cycle we, must have
\begin{equation}\label{eq:4-cyleeq1}
   ((\phi(i_0)-\phi(i_{1}))+( \psi(j_0)-\psi(j_{1})))=0\pmod{p}.
\end{equation}
\begin{equation}\label{eq:4-cyleeq2}
   ((\phi(i_0)-\phi(i_{1}))=0\pmod{p}.
\end{equation}
The equations~\eqref{eq:4-cyleeq1} and \eqref{eq:4-cyleeq2} cannot be satisfied simultaneously as $\phi$ and $\psi$ are bijective and $(i_{0},j_{0})\neq (i_{1},j_{1})$. This completes the proof.
\end{proof}
After establishing that the proposed parity-check tensor is free of $4$-cycles—and consequently that the associated 2-D code is also free of $4$-cycles—we next determine the exact rank of the proposed parity-check tensor, which in turn yields the exact code rate.

To this end, we adopt a geometric interpretation of the tensor entries: colored cubes correspond to entries equal to $1$, while transparent cubes represent entries equal to $0$. Let $\ell_{1}$ and $\ell_{2}$ denote any two horizontal layers of the parity-check tensor $H_{2\text{-D}}$. We then define the inner product between these layers as follows:
\begin{equation}\label{eq:innerproduct}
\langle \ell_{1}, \ell_{2}\rangle= \sum_{x,y=0}^{p^{2}-1}\ell^{(x,y)}_{1} \ell^{(x,y)}_{2},   
\end{equation}
where $ \ell^{(x,y)}_{i}$ represents the $(x,y)$th entry in the layer $\ell_{i}$. Now, we state the following lemma, which characterizes the inner product between any two layers of $H_{2\text{-D}}$ constructed using the shifts defined in \eqref{shifts}.
\begin{figure}
    \centering    \includegraphics[width=0.9\linewidth]{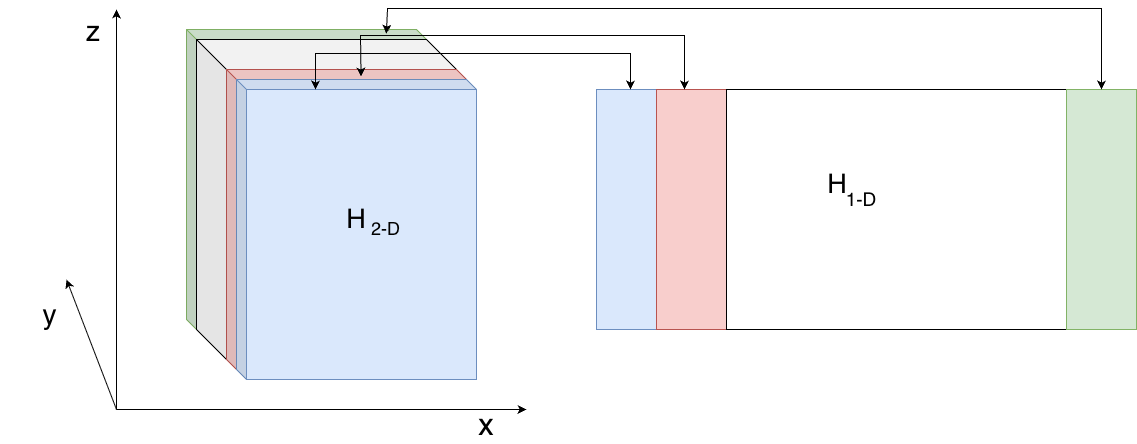}
   \caption{Unfolding of the $3$-D parity-check tensor $H_{2\text{-D}}$ into a $2$-D parity-check matrix $H_{1\text{-D}}$, where different colors represent a vertical layer in $H_{2\text{-D}}$.}
 % of size $27\times 81$. Transparent cubes represent zero entries, whereas colored cubes represent nonzero entries.  The columns are arranged in blocks of nine, with each block corresponding to a fixed value of $j$, thereby clearly illustrating the inherited block-layer structure
    \label{fig:2-dparitycheckmatrix}
\end{figure}
\begin{lemma}\label{maintheorem1} Let $H_{2\text{-D}}$ be the parity-check tensor  constructed based on the shifts in \eqref{shifts}, and
let $\ell_{1}$ and $\ell_{2}$ be any two horizontal layers in $H_{2\text{-D}}$, then we obtain
\begin{equation*}
  \langle \ell_{1}, \ell_{2}\rangle=\begin{cases}
    p^{2},&\text{if } \ell_{1}=\ell_{2}, \\
    1,&\text{if } \ell_{1}\neq\ell_{2} \text{ belong to a different}\\ &\text {horizontal block-layer},\\
    0,&\text{otherwise.} 
  \end{cases}  
\end{equation*}
\end{lemma}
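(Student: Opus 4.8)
The plan is to compute $\langle \ell_{1},\ell_{2}\rangle$ as the number of grid positions in the $xy$-plane at which both layers carry a nonzero entry, and to reduce this count to the number of solutions of a pair of linear congruences modulo $p$. First I would record the geometry: a global position $(X,Y)$ in a horizontal layer decomposes uniquely as $X = i p + x$, $Y = j p + y$ with $0\le x,y\le p-1$, so that $(X,Y)$ simultaneously fixes the block-column $(i,j)$ and the local offset $(x,y)$. Since the tensor sitting in each block-column is a permutation tensor, it contributes exactly one nonzero cube to any given horizontal layer; hence two layers $\ell_{1},\ell_{2}$ agree at a position if and only if, within a common block-column $(i,j)$, their single nonzero cubes share the same local coordinates $(x,y)$. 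Thus $\langle \ell_{1},\ell_{2}\rangle$ equals the number of block-columns $(i,j)$ for which these local coordinates coincide.

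Next I would make the coincidence condition explicit. Writing $\ell_{1}$ and $\ell_{2}$ at global heights $Z_{1}=K_{1}p+z_{1}$ and $Z_{2}=K_{2}p+z_{2}$ (so $K_{s}=\lfloor Z_{s}/p\rfloor$ is the tensor $k$-index and $z_{s}$ the local height), equations \eqref{eq:generalicase}--\eqref{eq:generaljcase} give the local coordinates in block-column $(i,j)$ as $x_{s}\equiv z_{s}+a(i,j,K_{s})$ and $y_{s}\equiv z_{s}+b(i,j,K_{s})\pmod p$. Substituting the shifts \eqref{shifts} and noting that $\lfloor K_{s}/p\rfloor$ is precisely the horizontal block-layer index $B_{s}$, the conditions $x_{1}=x_{2}$ and $y_{1}=y_{2}$ become, with $\Delta_{z}=z_{1}-z_{2}$, $\Delta_{\eta}=\eta(K_{1})-\eta(K_{2})$ and $\Delta_{B}=B_{1}-B_{2}$,
\begin{align*}
\Delta_{z}+\Delta_{\eta}+\Delta_{B}\bigl(\phi(i)+\psi(j)\bigr)&\equiv 0\pmod p,\\
\Delta_{z}+\Delta_{B}\,\phi(i)&\equiv 0\pmod p.
\end{align*}
Thus $\langle\ell_{1},\ell_{2}\rangle$ is the number of $(i,j)\in\mathbb{Z}_{p}\times\mathbb{Z}_{p}$ solving this system, and the proof splits according to whether $\Delta_{B}$ vanishes.

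When $\ell_{1}=\ell_{2}$ every block-column contributes, giving $p^{2}$. When $\ell_{1},\ell_{2}$ lie in different horizontal block-layers we have $B_{1}\neq B_{2}$ with $0\le B_{1},B_{2}\le p-1$, so $\Delta_{B}\not\equiv 0\pmod p$ and, $p$ being prime, $\Delta_{B}$ is invertible. The second congruence then determines $\phi(i)$ uniquely, hence $i$ uniquely by bijectivity of $\phi$; feeding this into the first congruence determines $\psi(j)$, hence $j$, uniquely by bijectivity of $\psi$. There is exactly one solution, so $\langle\ell_{1},\ell_{2}\rangle=1$. The remaining, and I expect most delicate, case is $\ell_{1}\neq\ell_{2}$ in the same block-layer: here $\Delta_{B}=0$, the two congruences collapse to the $(i,j)$-independent conditions $\Delta_{z}\equiv 0$ and $\Delta_{z}+\Delta_{\eta}\equiv 0$, so the count is either $p^{2}$ or $0$, and I must rule out the former. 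From $\Delta_{z}\equiv 0$ with $0\le z_{1},z_{2}\le p-1$ I get $z_{1}=z_{2}$, whence $\ell_{1}\neq\ell_{2}$ forces $K_{1}\neq K_{2}$ within the same block $B$; then $\Delta_{\eta}\equiv 0$ reads $\zeta_{B+1}(K_{1}\bmod p)=\zeta_{B+1}(K_{2}\bmod p)$, and the bijectivity of $\zeta_{B+1}$ gives $K_{1}\equiv K_{2}\pmod p$, i.e.\ $K_{1}=K_{2}$, a contradiction. Hence no block-column contributes and $\langle\ell_{1},\ell_{2}\rangle=0$, completing the case analysis. The crux is therefore to track how distinct layers inside one block-layer differ---either in local height $z$ or in the tensor index $K$---and to exploit the bijectivity built into $\eta$ through the maps $\zeta_{i}$ in order to exclude an accidental coincidence.
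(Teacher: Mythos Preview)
Your argument is correct and follows essentially the same route as the paper: both reduce $\langle\ell_{1},\ell_{2}\rangle$ to counting block-columns $(i,j)$ at which the unique nonzero local offsets coincide, leading to the same pair of congruences in $\phi(i)$ and $\psi(j)$. The only noteworthy differences are that the paper invokes Lemma~\ref{le:4-cyclefree} to obtain uniqueness in the different-block-layer case (whereas you argue it directly from invertibility of $\Delta_{B}$ and bijectivity of $\phi,\psi$), and that you spell out the same-block-layer case via the bijectivity of the $\zeta_{i}$, which the paper simply asserts follows from the shifts; your version is thus slightly more self-contained.
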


\begin{proof} By taking into account the shifts defined in \eqref{shifts}, it follows directly that for the parity-check tensor $H_{2\text{-D}}$, $\langle \ell_{1}, \ell_{2} \rangle = p^{2} \quad \text{if } \ell_{1}=\ell_{2},$
and
$\langle \ell_{1}, \ell_{2} \rangle = 0$ if  $\ell_{1}\neq \ell_{2}$ and  $\ell_{1}, \ell_{2}$ belong to the same horizontal block-layer. It remains to show that the inner product between the layers $\ell_{1}$ and $\ell_{2}$ is one whenever $\ell_{1} \neq \ell_{2}$ and they belong to different horizontal block-layer.

To prove the remaining claim, we first determine the positions of nonzero entries that are common to the layers \( \ell_{1} \) and \( \ell_{2} \), and then use this information to compute the inner product. For this purpose, we begin by identifying the positions of the nonzero entries in an arbitrary layer of \( H_{2\text{-D}} \). In \( H_{2\text{-D}} \), for \( k = 0 \), both the \( I \)-shifts and \( J \)-shifts are zero for each permutation tensor. Consequently, the positions of the ones in the top layer of the parity-check tensor \( H_{2\text{-D}} \) are given by
\begin{equation}
S_1 = \{(ip, jp\} \mid 0 \le i \le (p-1), 0 \le j \le (p-1)\}.
\end{equation}
For an arbitrary layer $0\leq \ell \leq (p-1)$ in the bottom tensor tile corresponding to $k = 0$, i.e., for $\ell  = 0$ to $\ell = p-1$, such positions are given by
\begin{equation}
\{(ip + \ell,jp + \ell) \mid 0 \le i \le (p-1), 0 \le j \le (p-1)\}.
\end{equation}
In general, for any $0\leq k\leq (p^{2}-1)$, $0 \le i, j \le (p-1)$ and any arbitrary horizontal layer $\ell$ ($0 \le \ell \le (p^{3}-1$)) of $H_{2\text{-D}}$,  we can find the position of 1s by taking into consideration the shifts in \eqref{shifts}. Such positions are given by
\begin{equation}\label{positions}
    S_\ell= \{(ip + \bmod(\ell + a(i, j, k), p), jp + \bmod(\ell + b(i, j, k), p)\}. 
\end{equation}
It is easy to see that $k = \lfloor \frac{\ell}{p} \rfloor$, where $0 \le \ell \le (p^{3}-1$). Thus, after substituting the values of $a(i, j, k)$ and $b(i, j, k)$ from~\eqref{shifts} in~\eqref{positions}, we have
 \begin{equation}\label{generalpositions}
\begin{aligned}
S_\ell
= \Bigg\{\Big(
&\, ip + \bmod\Big(
\ell + \eta\Big(\left\lfloor \frac{\ell}{p} \right\rfloor\Big)
+ \left\lfloor \frac{\ell}{p^2} \right\rfloor (\phi(i)+\psi(j)),\, p
\Big), \\
&\, jp + \bmod\Big(
\ell + \left\lfloor \frac{\ell}{p^2} \right\rfloor \phi(i),\, p
\Big)
\Big)
\Bigg\}.
\end{aligned}
\end{equation}
Putting $\ell = \ell_1$ and $\ell = \ell_2$ in~\eqref{generalpositions}, we get $S_{\ell_1}$ and $S_{\ell_2}$, which are the positions of 1s in layers $\ell_1$ and $\ell_2$. Equating $S_{\ell_1}$ and $S_{\ell_2}$ gives the following two equations:
\begin{align*}
&ip + \bmod \left( \ell_{1} +\eta\left( \left\lfloor \frac{\ell_{1}}{p} \right\rfloor \right)+ \left\lfloor \frac{\ell_{1}}{p^2} \right\rfloor (\phi(i) + \psi(j)), p \right)  \nonumber \\
&=ip + \bmod \left( \ell_{2} +\eta\left( \left\lfloor \frac{\ell_{2}}{p} \right\rfloor\right) + \left\lfloor \frac{\ell_{2}}{p^2} \right\rfloor (\phi(i) + \psi(j)), p \right);  
\end{align*}
{\small\begin{equation*}
   jp + \bmod \left( \ell_{1} + \left\lfloor \frac{\ell_{1}}{p^2} \right\rfloor \phi(i), p \right)= jp + \bmod \left( \ell_{2} + \left\lfloor \frac{\ell_{2}}{p^2} \right\rfloor \phi(i), p \right). 
\end{equation*}}
Simplifying the above equations, we have
\begin{align}
  & \bmod \left( x_{1} + x_{2} + x_{3}(\phi(i)+\psi(j)), p\right) = 0,\label{final_position_eqution1}\\
  & \bmod \left( x_{1} + x_{3}\phi(i), p \right) = 0,\label{final_position_eqution2} 
\end{align}
where 
$x_1 = \ell_1 - \ell_2$, $x_2 = \eta\left(\left\lfloor \frac{\ell_1}{p} \right\rfloor\right) - \eta\left( \left\lfloor \frac{\ell_2}{p} \right\rfloor\right)$ and  $x_3 = \left\lfloor \frac{\ell_1}{p^2} \right\rfloor - \left\lfloor \frac{\ell_2}{p^2} \right\rfloor.$ \\

 \textbf{Existence of Solution of Equations \eqref{final_position_eqution1} and \eqref{final_position_eqution2}:\\} Let $\ell_1$ and $\ell_2$ be two layers from different horizontal block-layers, i.e., $x_3\neq 0$. Then, by equation~\eqref{final_position_eqution2}, we have
 \begin{equation*}
     x_{1} + \phi(i) x_{3} = 0~(\bmod{~p})\iff i=\phi^{-1}\left(-\frac{x_{1}}{x_{3}}\pmod{p}\right). 
 \end{equation*}
By substituting the value of $\phi(i)$ in the equation \eqref{final_position_eqution1},  we obtain 
\begin{align*}
&\bmod \left( x_{1} + x_{2} + x_{3}(\phi(i)+\psi(j)), p\right) = 0\\
  &\iff x_{1} + x_{2} + x_{3}(-\frac{x_{1}}{x_{3}}+\psi(j))=0 \pmod{p}\\
  &\iff x_{2} +x_{3}\psi(j)=0 \pmod{p}\\
  &\iff j=\psi^{-1}\left(-\frac{x_{2}}{x_{3}}\pmod{p}\right).
\end{align*}
Consequently, for every pair of layers $(\ell_{1},\ell_{2})$ belonging to different horizontal block-layers, there exists a
pair $(i,j)$ that satisfies equations~\eqref{final_position_eqution1}
and~\eqref{final_position_eqution2}.

\textbf{Uniqueness of Solution of Equations \eqref{final_position_eqution1} and \eqref{final_position_eqution2}:} The uniqueness follows directly from Lemma~\ref{le:4-cyclefree}. Indeed, the existence of more than one solution to \eqref{final_position_eqution1} and \eqref{final_position_eqution2} would imply the presence of a $4$-cycle, which leads to a contradiction.
This completes the proof. \end{proof}
% To establish our claim, it remains to show that for every pair of layers \( \ell_{1} \) and \( \ell_{2} \), there exists a unique pair \( (i,j) \) that satisfies equations \eqref{final_position_eqution1} and \eqref{final_position_eqution2}.  
% Suppose on contrary that, there exist two solutions namely $(i_{1}, j_{1})$ and $(i_{2}, j_{2})$ that satisfy equations \eqref{final_position_eqution1} and \eqref{final_position_eqution2}, then we have 
% \begin{align}
%     &\bmod \left( x_{1} + x_{2} + x_{3}(\phi(i_{1})+\psi(j_{1})), p\right) = 0, \label{satisfied1}\\
%     &\bmod \left( x_{1} + x_{3}\phi(i_{1}), p \right) = 0, \label{satisfied2}\\
%    & \bmod \left( x_{1} + x_{2} + x_{3}(\phi(i_{2})+\psi(j_{2})), p\right) = 0, \label{satisfied3}\\
%     &\bmod \left( x_{1} + x_{3}\phi(i_{2}), p \right) = 0.\label{satisfied4}
% \end{align}
% By subtracting equation \eqref{satisfied2} from equation \eqref{satisfied4}, we obtain
% \begin{equation*}
% \bmod (x_{3}(\phi(i_{2})-\phi(i_{1})),p)=0,
% \end{equation*}
% which implies $i_{1} = i_{2}$, since $\phi$ is bijective.

% %%%%%%%%%%%%%%%%%%%%%%%%%%%%%%%%%%%%%%%%%%%%%%%%%%%%%%%%%%%%%%%%%%%%%%%%%%%%%%%%%%%%%%%%%%%%%%%%%%%%%%%%%%%%%
% Substitute \(i_{1}=i_{2}\) into equations~\eqref{satisfied1} and~\eqref{satisfied3}. Then, by subtracting equation~\eqref{satisfied1} from equation~\eqref{satisfied3}, we get
% \begin{equation*}
%      \bmod(x_{3}(\psi(j_{2})-\psi(j_{1})),p)=0,
% \end{equation*}
% which implies that $j_{1}=j_{2}$.
Viewing the layers of $ H_{2\text{-D}}$ as matrices, motivates the following definition. 

\begin{definition}[Rank of the parity-check tensor]
The maximum number of linearly independent layers in a parity-check tensor $ H_{2\text{-D}}$ is referred to as its rank.
\end{definition}
Using this definition, we now state, in the following theorem, the rank of a parity-check tensor $H_{2\text{-D}}$ formed by selecting a subset of horizontal block-layers.

\begin{theorem}\label{th:ranktheorem}
Let \( \widetilde{H}_{2\text{-D}} \) be a parity-check tensor obtained by selecting any $w$ ($1 \leq w \leq p$) horizontal block-layers from $H_{2\text{-D}}$. Then, $\mathrm{gfrank}(\widetilde{H}_{2\text{-D}}) = p^{2} + (w-1)(p^{2}-1).$
% \begin{equation*}
%     \mathrm{gfrank}(\widetilde{H}_{2\text{-D}}) = p^{2} + (w-1)(p^{2}-1).
% \end{equation*}
\end{theorem}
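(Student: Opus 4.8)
```latex
\textbf{Proof proposal.}
The plan is to compute the rank by counting linearly independent horizontal layers using the inner-product structure from Lemma~\ref{maintheorem1}, together with the $4$-cycle-free property of Lemma~\ref{le:4-cyclefree}. The key structural fact I would exploit is that within a single horizontal block-layer the $p^2$ layers are mutually orthogonal (inner product $0$) and each has squared norm $p^2$ over $\mathbb{F}_2$, whereas two layers from \emph{different} block-layers overlap in exactly one position. First I would fix the field to be $\mathbb{F}_2$ (since these are binary parity-check tensors, $\mathrm{gfrank}$ denotes rank over $\mathbb{F}_2$), and reinterpret $\langle \ell_1,\ell_2\rangle$ modulo $2$: within a block-layer distinct layers are orthogonal, so each block-layer contributes $p^2$ layers that are linearly independent over $\mathbb{F}_2$ (their Gram matrix is $p^2 I = I \pmod 2$ when $p$ is odd, hence full rank).

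Next I would analyze how layers from \emph{different} block-layers interact. Since each pair of layers drawn from two distinct block-layers meets in exactly one common nonzero position (by Lemma~\ref{maintheorem1}), I would argue that the sum (over $\mathbb{F}_2$) of all $p^2$ layers in any single block-layer equals the all-ones-on-the-support indicator, i.e.\ every column of the unfolded matrix $H_{1\text{-D}}$ is covered exactly once per block-layer. Concretely, for a fixed block-layer the supports of its $p^2$ layers partition the $p^2 \times p^2$ grid of small-cube positions, so the mod-$2$ sum of those $p^2$ layers is the constant all-ones vector $\mathbf{1}$. This gives one linear relation \emph{per} block-layer: the all-ones vector $\mathbf{1}$ is the common sum. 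The strategy is then to show that the only linear dependencies among the full collection of $wp^2$ layers are generated by equating these $w$ copies of $\mathbf{1}$, which yields $w-1$ independent relations.

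I would make this precise by setting up the rank count as follows. Each of the $w$ block-layers independently spans a $p^2$-dimensional space, so na\"ively we have $wp^2$ generators. The mod-$2$ sum of the $p^2$ layers in block-layer $t$ equals $\mathbf{1}$ for every $t$, and subtracting these relations pairwise gives $w-1$ genuine dependencies of the form $\bigl(\sum_{\ell \in \text{block } t}\ell\bigr) - \bigl(\sum_{\ell \in \text{block } t'}\ell\bigr) = 0$. I would then verify there are \emph{no further} dependencies: any nontrivial linear combination summing to zero must, when restricted to the support pattern and using the single-overlap property of Lemma~\ref{maintheorem1}, force the coefficient vector within each block-layer to be constant (all $0$ or all $1$), reducing to a relation among the $w$ copies of $\mathbf{1}$. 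Hence
\begin{equation*}
\mathrm{gfrank}(\widetilde{H}_{2\text{-D}}) = wp^2 - (w-1) = p^2 + (w-1)(p^2-1),
\end{equation*}
as claimed.

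The main obstacle I anticipate is the \emph{completeness} argument in the last step, namely proving that the space of linear dependencies is \emph{exactly} $(w-1)$-dimensional and not larger. Establishing the lower bound $p^2 + (w-1)(p^2-1)$ requires showing every dependency is generated by the $\mathbf{1}$-relations; the clean Gram-matrix argument handles intra-block independence, but across blocks I must rule out exotic cancellations where partial sums from different block-layers conspire. I would handle this by examining the Gram matrix $G$ of all $wp^2$ layers over $\mathbb{F}_2$: by Lemma~\ref{maintheorem1} it has the block form $G = I_w \otimes (\text{anti-diagonal-}0\text{ block}) + (J_w - I_w)\otimes J_{p^2}$ structure (identity blocks on the diagonal, all-ones blocks off-diagonal, reduced mod $2$), and I would compute $\mathrm{rank}_{\mathbb{F}_2}(G)$ directly, since $\mathrm{gfrank}(\widetilde{H}_{2\text{-D}}) = \mathrm{rank}_{\mathbb{F}_2}(G)$ for the layers viewed as binary vectors. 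Diagonalizing this highly structured matrix (identity diagonal blocks, all-ones off-diagonal blocks) should yield the nullity $w-1$ cleanly and bypass the need to trace individual cancellations.
```
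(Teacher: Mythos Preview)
Your approach is essentially the same as the paper's: both use the inner-product structure of Lemma~\ref{maintheorem1} (orthogonality within a block-layer, single overlap across block-layers) together with the observation that each block-layer sums to the all-ones matrix, then count the resulting dependencies. The paper removes one layer from each block-layer but the first and shows the remaining $p^{2}+(w-1)(p^{2}-1)$ layers are independent via the mod-$2$ inner product, ultimately deferring the final linear-algebra step to~\cite{kumar2024entanglement}; your Gram-matrix computation is a clean self-contained substitute for that deferred step, and the block form $G=I_{w}\otimes I_{p^{2}}+(J_{w}-I_{w})\otimes J_{p^{2}}$ indeed has nullspace $\{(c_{1}\mathbf{1},\dots,c_{w}\mathbf{1}):\sum c_{t}=0\}$ of dimension $w-1$.

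One caveat: your claim that $\mathrm{gfrank}(\widetilde{H}_{2\text{-D}})=\mathrm{rank}_{\mathbb{F}_{2}}(G)$ is not true in general over $\mathbb{F}_{2}$ (self-orthogonal vectors can make the Gram matrix drop rank); what always holds is $\mathrm{gfrank}(\widetilde{H}_{2\text{-D}})\ge \mathrm{rank}_{\mathbb{F}_{2}}(G)$, since $G=AA^{T}$. Fortunately this inequality is exactly the direction you need for the lower bound, and combined with the explicit $(w-1)$ relations you already exhibited for the upper bound, the sandwich closes and the proof goes through.
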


\begin{proof}
Assume that \(w=1\). Then, it is evident that  \(\langle \ell_{1}, \ell_{2} \rangle = 0\) for \(\ell_{1} \neq \ell_{2}\) whenever \(\ell_{1}\) and \(\ell_{2}\) belong to the same horizontal block-layer. Consequently, the set of mutually orthogonal layers, that is, layers satisfying \(\langle \ell_{1}, \ell_{2} \rangle = 0\), is linearly independent. Therefore, \(\mathrm{gfrank}(\widetilde{H}_{2\text{-D}}) = p^{2}\).

Taking the shifts defined in \eqref{shifts} into account, it is straightforward to verify that the sum of all layers within any horizontal block-layer is the all-ones matrix, which implies a linear dependence among the layers in horizontal block-layers. Now, consider the \(w\) horizontal block-layers of $\widetilde{H}_{2\text{-D}}$, and remove one horizontal layer from every horizontal block-layer except the first one. We delete the last horizontal layer from every horizontal block-layer except the first one and consider the following linear combination of all the remaining layers in the all horizontal-block layers:
  \begin{small}
      \begin{equation} \label{linearcombi}   \sum_{i_{1}=0}^{p^{2}-1}\alpha_{i_{1}}^{(0)}L^{(0)}_{i_{1}}+\sum_{i_{2}=0}^{p^{2}-2}\alpha_{i_{2}}^{(1)}L^{(1)}_{i_{2}}+\cdots+\sum_{i_{w}=0}^{p^{2}-2}\alpha_{i_{w}}^{(w-1)}L^{(w-1)}_{i_{w}}=0,
	 \end{equation}
  \end{small}
where $L_{i}^{(j)}$ stands for the $i$th layer in the $j$th horizontal block-layer and  the coefficients are binary scalars. Consider the inner product defined in \eqref{eq:innerproduct} with $\pmod{2}$. The remainder of the proof follows directly from Lemma~\ref{maintheorem1} and the same line of reasoning employed in the proof of Theorem~1 in~\cite{kumar2024entanglement}.
\end{proof}

Having established the necessary groundwork, we now present 2-D classical QC-LDPC codes along with their parameters. To this end, we convert the parity-check tensor $H_{2\text{-D}}$ into a 2-D parity-check matrix, as illustrated in Fig.~\ref{fig:2-dparitycheckmatrix}. Using the techniques, we state the following theorem.
\begin{theorem}\label{th:2dclassicalwithprime}
  Let $\widetilde{H}_{2\text{-D}}$ be a parity-check tensor obtained by selecting first $w$ ($1 \leq w \leq p$) horizontal block-layers from $H_{2\text{-D}}$. Then there exists a 2-D classical QC-LDPC code with parameters $[p^{4}, (p^{2}-w+1)(p^{2}-1)]_{2}$ whose Tanner graph is free of cycles of length~\(4\). Moreover, the constructed code has the capability to correct \( p \times p \) burst erasures.
\end{theorem}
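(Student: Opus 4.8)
The plan is to verify the three asserted properties of the code defined by $\widetilde{H}_{2\text{-D}}$ separately: the length–dimension pair, the absence of $4$-cycles, and the $p\times p$ burst-erasure correction. \emph{Length and dimension.} The code length is the number of columns of the unfolded matrix $H_{1\text{-D}}$, namely $bcp^{2}$; since the present construction stacks $p\times p\times p^{2}$ tensors we have $b=c=p$, giving $n=p^{4}$, and this count is unaffected by retaining only $w$ horizontal block-layers (deleting block-layers removes rows, not columns). The redundancy equals the rank of $H_{1\text{-D}}$, which coincides with the number of linearly independent horizontal layers of $\widetilde{H}_{2\text{-D}}$, i.e. $\mathrm{gfrank}(\widetilde{H}_{2\text{-D}})$. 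By Theorem~\ref{th:ranktheorem} this rank is $p^{2}+(w-1)(p^{2}-1)$, so the dimension is
\[
n-\mathrm{gfrank}(\widetilde{H}_{2\text{-D}})
= p^{4}-p^{2}-(w-1)(p^{2}-1)
= (p^{2}-w+1)(p^{2}-1),
\]
which is the claimed parameter; the last equality is a direct factorization of $p^{2}(p^{2}-1)-(w-1)(p^{2}-1)$.

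\emph{Freedom from $4$-cycles.} Since $\widetilde{H}_{2\text{-D}}$ is obtained from $H_{2\text{-D}}$ by retaining only $w$ horizontal block-layers, its Tanner graph is a subgraph of the Tanner graph of $H_{2\text{-D}}$. Lemma~\ref{le:4-cyclefree} shows the latter has no $4$-cycles, and any subgraph of a $4$-cycle-free graph is itself $4$-cycle-free; hence the code defined by $\widetilde{H}_{2\text{-D}}$ has girth greater than $4$.

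\emph{Burst-erasure correction.} I would reformulate correctability of a $p\times p$ burst erasure as linear independence over $\mathbb{F}_{2}$ of the $p^{2}$ columns of $H_{1\text{-D}}$ indexed by the erased positions (equivalently, that no nonzero codeword is confined to a $p\times p$ window). For a burst aligned with a single block-column $(i_{0},j_{0})$, I would restrict attention to one retained horizontal block-layer, which contributes $p^{2}$ layers. By~\eqref{generalpositions} each such layer has exactly one nonzero entry inside block $(i_{0},j_{0})$, at the within-block cell $\big((r+a(i_{0},j_{0},k))\bmod p,\,(r+b(i_{0},j_{0},k))\bmod p\big)$ with $r=\ell\bmod p$ and $k=\lfloor\ell/p\rfloor$. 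Using~\eqref{shifts}, the diagonal offset satisfies $a-b=\eta(k)+\lfloor k/p\rfloor\,\psi(j_{0})$, which runs over all of $\mathbb{Z}_{p}$ as $k$ varies within the block-layer (by bijectivity of the maps $\zeta_{i}$ composing $\eta$), while $r$ runs over all of $\mathbb{Z}_{p}$ inside each tensor. Hence $\ell\mapsto$ within-block cell is a bijection onto the $p^{2}$ cells of the block, so these $p^{2}$ layers restricted to the erased columns form a permutation matrix of full rank $p^{2}$; the columns are therefore independent and the aligned burst is recoverable.

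The step I expect to be the main obstacle is the \emph{unaligned} burst, whose $p\times p$ window straddles up to four adjacent block-columns, so that within each affected block only a sub-rectangle of cells is erased and a single layer may meet the window in more than one erased cell, breaking the clean permutation-matrix argument above. For this case I would instead argue directly that no nonzero codeword is supported inside the window, tracking block by block how the diagonal pattern of~\eqref{generalpositions} intersects the erased sub-rectangles and showing that the induced parity constraints admit only the trivial confined solution; controlling these boundary interactions is the crux of the proof.
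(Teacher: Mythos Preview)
Your arguments for the length--dimension pair and the $4$-cycle freedom match the paper's exactly (invoke Theorem~\ref{th:ranktheorem} and Lemma~\ref{le:4-cyclefree}, respectively).

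For the burst-erasure part, however, the paper takes a different and much shorter route that sidesteps the obstacle you flag. Rather than arguing linear independence of the erased columns, it appeals to the ``isolated~$1$'' criterion of~\cite[Lemma~2]{matcha2018two}: a $p\times p$ burst is correctable provided that for every position $(i,j)$ there exists a layer $k$ in which $h_{i,j,k}=1$ and all other entries within the $(2p-1)\times(2p-1)$ window around $(i,j)$ vanish. The point is that this condition is already verified by the \emph{first} horizontal block-layer alone: with $\lfloor k/p\rfloor=0$ the shifts reduce to $a(i,j,k)=\eta(k)$ and $b(i,j,k)=0$, so in any single layer the nonzeros sit at $(ip+c_{1},\,jp+c_{2})$ for a fixed $(c_{1},c_{2})$ and all $i,j$. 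Consequently any two nonzeros in the same layer differ by multiples of $p$ in both coordinates, hence a $p\times p$ window contains at most one of them regardless of alignment. Combined with your own bijection argument (which shows $(c_{1},c_{2})$ ranges over all of $\mathbb{Z}_{p}^{2}$ across the $p^{2}$ layers), every position is isolated in some layer, and the unaligned case dissolves.

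Your column-independence plan is not wrong, but the boundary analysis you anticipate is unnecessary: the observation you are missing is precisely the $p$-periodic grid structure of the nonzeros within each layer, which makes the peeling condition alignment-agnostic.
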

\begin{proof} Theorem~\ref{th:ranktheorem} establishes the dimension of the proposed code, while in Lemma~\ref{le:4-cyclefree} we have already proved that  the Tanner graph of the code is free of cycles of length~\(4\).

It remains to show that the code corrects \( p \times p \) burst erasures. To show that we consider the Lemma~2 in~\cite{matcha2018two} which guarantees that a 2-D classical LDPC code with parity-check tensor \(H_{2\text{-D}} = [h_{i,j,k}]\) can correct a 2-D burst erasure of size \(s \times t\) if, for every position \((i,j)\), there exists a \(k\)th layer of \(H_{2\text{-D}}\) such that
\begin{equation}\label{eq:conditionforerasure}
h_{x,y,k} =
\begin{cases}
1, & (x,y) = (i,j),\\
0, & (x,y) \in \{(i+u,j+v): \\&-s<u<s,\,-t<v<t\}\setminus\{(i,j)\}.
\end{cases}
\end{equation}
The condition in~\eqref{eq:conditionforerasure} ensures that the $k$th parity-check equation contains exactly one nonzero entry within the erasure region, located at position \((i,j)\). As a result, this parity-check equation involves only the \((i,j)\)th code symbol from the erased block, enabling its unique recovery.

Finally, by examining the layers in the first block-horizontal layer of $H_{2\text{-D}}$ with the shifts defined in~\eqref{shifts}, it follows directly that the proposed construction satisfies the above condition for a $p \times p$ erasure pattern. Hence, the constructed code is capable of correcting any 2-D burst erasures of size $p \times p$.\end{proof}

In the previous subsection, we have only constructed 2-D classical LDPC codes by tiling tensors of size $p\times p\times p$, where $p$ is an odd prime. Now, we are going to state a result, where the size of the tensor is $p\times p\times p$, where $p$ is composite. 
\begin{theorem} Let $c$, $b$ and $h$ be positive integers, where $p \mid h$.
Let the parity-check tensor $H_{2\text{-D}}$ be constructed by stacking
$c \times b \times h$ permutation tensors of size $p \times p \times p$
along the $i$-, $j$- and $k$-directions, respectively with shifts defined in~\eqref{shifts}. If
\begin{equation*}
    (c-1)\left(\frac{h}{p}-1\right) < p
\quad \text{and} \quad
(b-1)\left(\frac{h}{p}-1\right) < p,
\end{equation*}
then the Tanner graph associated with $H_{2\text{-D}}$ contains no cycles of
length~$4$. Moreover, the 2-D classical QC-LDPC code defined by $H_{2\text{-D}}$ is
capable of correcting any $p \times p$ burst erasure.
\end{theorem}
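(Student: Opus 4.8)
For the first claim I would invoke Theorem~\ref{thm:2D-2k-cycle} with $g=2$ and split the analysis according to whether the two check nodes of a putative $4$-cycle lie in the same or in different horizontal block-layers. If they lie in the same block-layer, the covering property of assumption~(6) settles it: the $p^{2}$ horizontal layers of a block-layer each carry $cb$ ones and together tile the $cp\times bp$ grid exactly once, so any two distinct such layers have disjoint support and cannot share two variable nodes. For the remaining case I would reuse, verbatim, the substitution carried out in the proof of Lemma~\ref{le:4-cyclefree}: for the closed path $(i_{0},j_{0},k_{0}),(i_{1},j_{1},k_{0}),(i_{1},j_{1},k_{1}),(i_{0},j_{0},k_{1})$ with $k_{0}\neq k_{1}$, Theorem~\ref{thm:2D-2k-cycle} shows that a $4$-cycle exists if and only if
\begin{align*}
\alpha\bigl((\phi(i_{0})-\phi(i_{1}))+(\psi(j_{0})-\psi(j_{1}))\bigr)&\equiv 0 \pmod p,\\
\alpha\bigl(\phi(i_{0})-\phi(i_{1})\bigr)&\equiv 0 \pmod p,
\end{align*}
where $\alpha=\bigl\lfloor \tfrac{\bar k_{1}}{p}\bigr\rfloor-\bigl\lfloor \tfrac{\bar k_{0}}{p}\bigr\rfloor$ equals the (nonzero) difference of the two block-layer indices.

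The main obstacle is that, for composite $p$, the coefficient $\alpha$ need not be invertible modulo $p$, so the cancellation used in the prime case of Lemma~\ref{le:4-cyclefree} is no longer available. My plan is to replace that cancellation by an integer-magnitude argument, taking $\phi$ and $\psi$ to be the identity maps as in the composite construction. Because the block-layer indices run over $\{0,1,\dots,\tfrac{h}{p}-1\}$, we have $1\le|\alpha|\le\tfrac{h}{p}-1$, while $|i_{0}-i_{1}|\le c-1$ and $|j_{0}-j_{1}|\le b-1$. If $i_{0}\neq i_{1}$, the second congruence would force $p$ to divide the nonzero integer $\alpha(i_{0}-i_{1})$, yet $|\alpha(i_{0}-i_{1})|\le (c-1)\bigl(\tfrac{h}{p}-1\bigr)<p$ by hypothesis, a contradiction; hence $i_{0}=i_{1}$. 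The first congruence then collapses to $\alpha(j_{0}-j_{1})\equiv 0\pmod p$, and the analogous bound $(b-1)\bigl(\tfrac{h}{p}-1\bigr)<p$ forces $j_{0}=j_{1}$. This contradicts $(i_{0},j_{0})\neq(i_{1},j_{1})$, so no $4$-cycle across distinct block-layers exists, completing the girth argument.

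For the burst-erasure claim I would argue exactly as in the proof of Theorem~\ref{th:2dclassicalwithprime}, invoking Lemma~2 of~\cite{matcha2018two}. The key observation is that in the first horizontal block-layer one has $\lfloor k/p\rfloor=0$, so the $Y$-shift $b(i,j,k)$ vanishes and the $X$-shift reduces to $\eta(k)$; this portion of $H_{2\text{-D}}$ is structurally identical to the first block-layer of the prime construction, and its analysis nowhere uses primality of $p$. Hence, for every grid position, the first block-layer already supplies a parity check whose unique nonzero entry inside any $p\times p$ window sits exactly at that position, which is precisely the hypothesis of Lemma~2 of~\cite{matcha2018two} guaranteeing correction of an arbitrary $p\times p$ burst erasure.
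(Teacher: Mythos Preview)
Your proposal is correct and matches the paper's own (very brief) proof, which simply says to follow the reasoning of Lemma~\ref{le:4-cyclefree} for the girth claim and to invoke Theorem~\ref{th:2dclassicalwithprime} for the erasure claim. You have supplied exactly the details the paper leaves implicit: the integer-magnitude argument replacing the invertibility-of-$\alpha$ step from the prime case is precisely what the hypotheses $(c-1)\bigl(\tfrac{h}{p}-1\bigr)<p$ and $(b-1)\bigl(\tfrac{h}{p}-1\bigr)<p$ are designed to enable, and your explicit treatment of the same-block-layer case via assumption~(6) is a sound addition.
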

\begin{proof}
 The $4$-cycle-free conditions can be established by following the same line of reasoning as in the proof of Lemma~\ref{le:4-cyclefree}, while the erasure-correction capability has already been proved in Theorem~\ref{th:2dclassicalwithprime}.
\end{proof}
In the next subsection, we focus on the construction of 2-D classical QC-LDPC codes with girth greater than~$6$.
\subsection{2-D Classical QC-LDPC Codes with Girth Greater than~$6$}
It is well-known that the presence of short cycles—particularly cycles of length $4$—in the Tanner graph of a classical code adversely affects decoding performance. Thus far, our focus has been on constructing codes free of $4$-cycles. This naturally motivates the development of classical codes that avoid both $4$- and $6$-cycles. Accordingly, we proceed to construct 2-D classical QC-LDPC codes that are free of cycles of lengths $4$ and $6$. To this end, we first state the generalized Behrend sequence from~\cite{venkataramanappa2025array}.

 For any integers $d \geq 1$, $n \geq 2$, $k \leq nd^2$ and $\delta \geq 2$, define the set $\mathcal{B}(n,k,d,\delta)$ as follows:
\begin{equation}
   \mathcal{B}(n,k,d,\delta)=\{y: y = \sum_{i=1}^{n-1} y_i(\delta d+1)^{i}\}, 
\end{equation}
where the $y_i$'s are integers subject to the following conditions
\begin{equation*}
     0 \leq x_i \leq d, 
    ||y||^2 = k, \text{ where }  ||y|| = \sqrt{\sum_{i=0}^{n-1}y_i^2},
\end{equation*}
where $k$ is fixed for each element in the set $  \mathcal{B}(n,k,d,\delta)$. 
% \begin{align}
%     0 \leq x_i \leq d,\label{eq:maxvalue} & \\
%     ||x||^2 = k, \label{norm_condition}& 
% \end{align}
% where
% \begin{equation}
%     ||x|| = \sqrt{\sum_{i=0}^{n-1}x_i^2}.
% \end{equation}

% \begin{lemma}\emph{\cite[Lemma 1]{venkataramanappa2025array}}\label{lemma_a_not_equal_b}
% Let the symbols have the same meanings as before. Let  $a=\sum_{i=0}^{n-1}a_{i}(\lambda d+1)^{i}$ and $b= \sum_{i=0}^{n-1}b_{i}(\lambda d+1)^{i}$ be in $S(n,k,d,\lambda)$ and let $1\leq p,q\leq \lambda$, then $pa=qb$ if and only if $a_{i}=b_{i}$, for all $0\leq i\leq (n-1)$ and $p=q$.
% \end{lemma}
% \begin{lemma}\emph{\cite[Lemma 2]{venkataramanappa2025array}}\label{lemma_p_not_equal_q}
% Let the symbols have their previously defined meanings. Let  $a, b, c \in S(n,k,d,\lambda)$ be distinct, and let $2 \leq p \leq \lambda $ and $1 \leq q, r \leq (\lambda-1)$ such that $p = q + r$. Then $p b \neq q c + r a $.
% \end{lemma}
To construct the parity-check tensor $H_{2\text{-D}}$ of size $c\times b\times h$. We appropriately choose the parameters \( n \), \( d \) and $\delta=\left(\frac{h}{p}-1\right)$, where $p\mid h$, for the set \( \mathcal{B}(n,k,d,\delta) \), such that the cardinality of \( \mathcal{B}(n,k,d,\delta) \) is at least $\text{max}\{c,b\}$. We, further, impose the following condition:
\begin{equation}
    p\geq\left(\frac{h}{p}-1\right)\left(\text{max}\{b\in\mathcal{B}(n,k,d,\delta))\}\right)+1.
\end{equation}
We now proceed to construct 2-D classical QC-LDPC codes free of $4$- and $6$-cycles using the following shifts:
\begin{equation}\label{eq:shifts2}
   \left.
\begin{aligned}
a(i, j, k) = \bmod\left(\eta(k) + \left\lfloor \frac{k}{p} \right\rfloor (\phi(i) + \psi(j)), p\right), \\
b(i, j, k)= \bmod\left(\left \lfloor \frac{k}{p} \right\rfloor \phi(i), p\right),
\end{aligned}
\right\} 
\end{equation}
where $\phi:\mathbb{Z}_{c}\to \mathcal{B}(n,k,d,\delta)$ and  $\psi:\mathbb{Z}_{b}\to \mathcal{B}(n,k,d,\delta)$ are injective (one-one) mappings, and the mapping $\eta:\mathbb{Z}_{h}\to\mathbb{Z}_{p}$ is  defined as:
\begin{equation*}
    \eta(k)=\begin{cases}
        \zeta_{1}(k), &\text{ for } 0\leq k\leq (p-1);\\
         \zeta_{2}(\bmod{(k,p)}), &\text{ for } p\leq k\leq (2p-1);\\
           \qquad \vdots &\qquad\vdots\\
           \zeta_{\frac{h}{p}}(\bmod{(k,p)}), &\text{ for } p\leq k\leq ((\frac{h}{p})p-1);
    \end{cases}
\end{equation*}
where $\zeta_{i}:\mathbb{Z}_{p}\to\mathbb{Z}_{p}$ is a bijective mapping for each $i$.

By applying the shifts in \eqref{eq:shifts2}, we construct a 2-D classical QC-LDPC codes, whose Tanner graph is free  from $4$- and $6$-cyles, in the following theorem.
\begin{theorem} Let $H_{2\text{-D}}$ be a parity-check tensor of dimension $c \times b \times h$ along the $i$-, $j$-, and $k$-directions, respectively stacked with shifts specified in~\eqref{eq:shifts2}, and let $\mathcal{C}$ denote the associated 2-D classical QC-LDPC code. Then the Tanner graph of $\mathcal{C}$ contains no cycles of length $4$ or $6$. Moreover, the constructed code is capable of correcting any $p \times p$ burst erasure.
\end{theorem}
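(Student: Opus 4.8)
The plan is to prove the three assertions separately, reading the two girth claims off the cycle criterion of Theorem~\ref{thm:2D-2k-cycle} specialized to $g=2$ and $g=3$, and to deduce the erasure claim exactly as in Theorem~\ref{th:2dclassicalwithprime}. The essential difference from the odd-prime analysis of Lemma~\ref{le:4-cyclefree} is that here $p$ is composite, so the block-layer multiplier need not be invertible modulo $p$; in place of dividing by it I would exploit the imposed parameter bound $p\ge\delta\,(\max\{b\in\mathcal{B}(n,k,d,\delta)\})+1$, with $\delta=\tfrac{h}{p}-1$, to lift each modular relation to a genuine integer identity.

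For the absence of $4$-cycles I would run Lemma~\ref{le:4-cyclefree} up to the point where Theorem~\ref{thm:2D-2k-cycle} with $g=2$ and the shifts~\eqref{eq:shifts2} give
\[
\alpha\bigl[(\phi(i_0)-\phi(i_1))+(\psi(j_0)-\psi(j_1))\bigr]\equiv 0,\qquad \alpha\,(\phi(i_0)-\phi(i_1))\equiv 0 \pmod p,
\]
where $\alpha$ is the difference of the two block-layer indices, so $\alpha\neq0$ and $|\alpha|\le\delta$. Since $|\phi(i_0)-\phi(i_1)|\le\max\{b\}$, the integer $\alpha(\phi(i_0)-\phi(i_1))$ has absolute value at most $\delta\max\{b\}<p$; the congruence is therefore an integer identity, and $\alpha\neq0$ forces $\phi(i_0)=\phi(i_1)$. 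Subtracting the two relations gives $\alpha(\psi(j_0)-\psi(j_1))\equiv0$, which by the same bound yields $\psi(j_0)=\psi(j_1)$. Injectivity of $\phi,\psi$ then contradicts $(i_0,j_0)\neq(i_1,j_1)$.

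The $6$-cycle case is the crux. Applying Theorem~\ref{thm:2D-2k-cycle} with $g=3$ and substituting~\eqref{eq:shifts2}, the $\eta$-terms cancel pairwise, and substituting the $b$-relation into the $a$-relation collapses the pair of conditions to
\[
\sum_{m=0}^{2}\gamma_m\,\phi(i_m)\equiv 0,\qquad \sum_{m=0}^{2}\gamma_m\,\psi(j_m)\equiv 0 \pmod p,
\]
where the $\gamma_m$ are the cyclic differences of the three block-layer indices $\beta_m$, so $\sum_m\gamma_m=0$, each $\gamma_m\neq0$ (the three block-layers are pairwise distinct along a $6$-cycle), and the sum of the positive $\gamma_m$ equals $\beta_{\max}-\beta_{\min}\le\delta$. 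I would then argue by three successive lifts. First, because the positive part of $\{\gamma_m\}$ is at most $\delta$, the integer $\sum_m\gamma_m\phi(i_m)$ lies strictly between $-p$ and $p$, so the congruence becomes the integer identity $\sum_m\gamma_m\phi(i_m)=0$. Second, expanding each $\phi(i_m)\in\mathcal{B}(n,k,d,\delta)$ in its base-$(\delta d+1)$ digits, the $t$-th combined digit $\sum_m\gamma_m\,y^{(m)}_t$ has absolute value at most $\delta d<\delta d+1$; a zero integer whose signed base-$(\delta d+1)$ digits all have absolute value below the base must vanish digit-by-digit, which promotes the scalar identity to the vector identity $\sum_m\gamma_m\vec{y}^{(m)}=\vec{0}$ in digit space. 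Third, since $\sum_m\gamma_m=0$ with every $\gamma_m\neq0$, this writes $\vec{y}^{(2)}$ as an affine combination of $\vec{y}^{(0)},\vec{y}^{(1)}$, i.e.\ a point on the line through them; as all three digit-vectors lie on the common sphere $\|\vec{y}\|^2=k$ and a line meets a sphere in at most two points, strict convexity forces $\vec{y}^{(0)}=\vec{y}^{(1)}=\vec{y}^{(2)}$, hence $\phi(i_0)=\phi(i_1)=\phi(i_2)$. The same argument on the second congruence gives $\psi(j_0)=\psi(j_1)=\psi(j_2)$, and injectivity then yields $(i_0,j_0)=(i_1,j_1)=(i_2,j_2)$, contradicting the defining inequalities of a $6$-cycle. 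The main obstacle is exactly this chain of lifts: tuning the parameter bound so the modular relation is genuinely an integer one, and then recognizing that the base-$(\delta d+1)$ encoding converts that scalar relation into a vector relation on the Behrend sphere, where the no-three-points-collinear property can be invoked.

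Finally, for the erasure claim I would reuse Theorem~\ref{th:2dclassicalwithprime}: in the first horizontal block-layer one has $\lfloor k/p\rfloor=0$, so~\eqref{eq:shifts2} degenerates to $a(i,j,k)=\eta(k)$ and $b(i,j,k)=0$ with $\eta$ bijective on that layer, which is precisely the configuration shown there to satisfy condition~\eqref{eq:conditionforerasure} for an $s\times t=p\times p$ pattern. Lemma~2 of~\cite{matcha2018two} then guarantees that $\mathcal{C}$ corrects any $p\times p$ burst erasure.
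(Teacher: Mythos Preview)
Your proof is correct and follows the same skeleton as the paper: apply Theorem~\ref{thm:2D-2k-cycle} for $g=2,3$, substitute the shifts~\eqref{eq:shifts2}, reduce to the two congruences $\sum_m\gamma_m\phi(i_m)\equiv0$ and $\sum_m\gamma_m\psi(j_m)\equiv0\pmod p$ with $\sum_m\gamma_m=0$, force all $(i_m,j_m)$ to coincide, and reuse the erasure argument from Theorem~\ref{th:2dclassicalwithprime}.

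The difference is in how the central implication ``$\sum_m\gamma_m\phi(i_m)\equiv0\Rightarrow\phi(i_0)=\phi(i_1)=\phi(i_2)$'' is handled. The paper simply cites Lemma~1 and Lemma~2 of~\cite{venkataramanappa2025array} as black boxes. You instead prove these lemmas in place via the three-stage lift: the size constraint $p\ge\delta\max\{b\in\mathcal{B}\}+1$ together with $\sum_{\gamma_m>0}\gamma_m=\beta_{\max}-\beta_{\min}\le\delta$ converts the congruence to an integer identity; the bounded-digit argument (a polynomial in base $\delta d+1$ with signed digits of absolute value at most $\delta d$ vanishes only if every digit vanishes) promotes this to a vector identity in digit space; and the affine-combination-on-a-sphere step exploits $\|\vec y\|^2=k$ to collapse the three digit vectors. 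This is exactly the mechanism behind the generalized Behrend construction, so what you have written is essentially an inlined proof of the cited lemmas. Your version is more self-contained; the paper's is shorter but opaque without access to~\cite{venkataramanappa2025array}.
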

 \begin{proof} First, we prove that the Tanner graph of the code does not have $4$-cycles. To, prove this, consider arbitrary horizontal block-layers $k_{0}$ and $k_{1}$, where $0\leq k_{0}, k_{1}\leq (\frac{h}{p}-1)$. To check the existence of $4$-cycle, consider the following arbitrary closed path of length $4$ of tensors indices:
     \begin{equation*}
(i_{0},j_{0},k_{0}),(i_{1},j_{1},k_{0});
(i_{1},j_{1},k_{1}),(i_{0},j_{0},k_{1}),
\end{equation*}
where $i_s \in \{0,1,2,\ldots,(c-1)\}$, $ j_t \in \{0,1,2,\ldots,(b-1)\}$ for all $s,t \in \{0,1\}$.
Then, by Theorem~\ref{thm:2D-2k-cycle}, a $4$-cycle exists along the closed path
$(i_{0}, j_{0}, k_{0}), (i_{1}, j_{1}, k_{0}), (i_{1}, j_{1}, k_{1}), (i_{0}, j_{0}, k_{1})$
if and only if there exists a set $\{\bar{k}_{0}, \bar{k}_{1}\}$ of $k$-indices, where each
$\bar{k}_{i}$ is the $k$-coordinate of a tensor in the horizontal block layer $k_{i}$, satisfying the conditions required to connect the corresponding cycle segments, and the following equations are satisfied simultaneously:
\begin{align*}
\sum_{m=0}^{1}
\Bigl(
a(i_{m},j_{m},\Bar{k}_{m})
-
a(i_{m+1},j_{m+1},\Bar{k}_{m})
\Bigr)
&= 0 \pmod p,\\
\sum_{m=0}^{1}
\Bigl(
b(i_{m},j_{m},\Bar{k}_{m})
-
b(i_{m+1},j_{m+1},\Bar{k}_{m})
\Bigr)
&=0 \pmod p, 
\end{align*}
where $(i_2,j_2)=(i_0,j_0)$. 
Now, by substituting the shifts defined in \eqref{eq:shifts2} into the above equations, a $4$-cycle exists in the Tanner graph of $H_{2\text{-D}}$ if and only if the following equations are satisfied:
\begin{equation}\label{eq:4cyclein46}
   \alpha((\phi(i_0)-\phi(i_{1}))+( \psi(j_0)-\psi(j_{1})))=0\pmod{p},
\end{equation}
\begin{equation}\label{eq:4cyclein46phi}
  \alpha ((\phi(i_0)-\phi(i_{1}))=0\pmod{p},
\end{equation}
where $\alpha= \left(\left\lfloor \frac{\Bar{k}_{1}}{p}\right\rfloor-\left\lfloor \frac{\Bar{k}_{0}}{p}\right\rfloor\right)\neq0\pmod{p}$ and less than or equal to $\left(\frac{h}{p}-1\right)$. Consider the equation~\eqref{eq:4cyclein46phi}, by Lemma 1 in \cite{venkataramanappa2025array}, we have $\phi(i_0)=\phi(i_{1})$ which implies that $i_{0}=i_{1}$. By putting $i_{0}=i_{1}$ in equation~\eqref{eq:4cyclein46}, we have
\begin{equation}
   \alpha( \psi(j_0)-\psi(j_{1}))=0.  
\end{equation}
Again by Lemma 1 in \cite{venkataramanappa2025array}, we have $\psi(j_0)=\psi(j_{1})$ which implies that $j_{0}=j_{1}$, which is a contradiction since $(i_{0},j_{0})\neq(i_{1},j_{1})$. Consequently, we can say that the Tanner graph of $H_{2\text{-D}}$ is free from $4$-cycles.

Next, we proceed to prove that the Tanner graph of $H_{2\text{-D}}$ is free from $6$-cycles. To prove that consider arbitrary horizontal block-layers $k_{0}$, $k_{1}$ and $k_{2}$, where $0\leq k_{0}, k_{1},k_{2}\leq (\frac{h}{p}-1)$. To check the existence of $6$-cycle, consider the  following arbitrary closed path of length $6$ of tensors indices: $
(i_{0},j_{0},k_{0}),(i_{1},j_{1},k_{0});$
$(i_{1},j_{1},k_{1}),(i_{2},j_{2},k_{1});$ $
(i_{2},j_{2},k_{2}),$ $(i_{0},j_{0},k_{2}),$
where $i_s \in \{0,1,2,\ldots,(c-1)\}$ and $j_t \in \{0,1,2,\ldots,(b-1)\}$ for all $s,t \in \{0,1\}$.

 Then, by Theorem~\ref{thm:2D-2k-cycle}, a $6$-cycle exists along the closed path
$(i_{0},j_{0},k_{0}),(i_{1},j_{1},k_{0});$
$(i_{1},j_{1},k_{1}),(i_{2},j_{2},k_{1});$ $
(i_{2},j_{2},k_{2}),$ $(i_{0},j_{0},k_{2}),$
if and only if there exists a set $\{\bar{k}_{0}, \bar{k}_{1}, \bar{k}_{2}\}$ of $k$-indices, where each
$\bar{k}_{i}$ is the $k$-coordinate of a tensor in the horizontal block layer $k_{i}$,
satisfying the conditions required to connect the corresponding cycle segments, and the following equations are simultaneously satisfied: 
\begin{align*}
\sum_{m=0}^{2}
\Bigl(
a(i_{m},j_{m},\Bar{k}_{m})
-
a(i_{m+1},j_{m+1},\Bar{k}_{m})
\Bigr)
&= 0 \pmod p,\\
\sum_{m=0}^{2}
\Bigl(
b(i_{m},j_{m},\Bar{k}_{m})
-
b(i_{m+1},j_{m+1},\Bar{k}_{m})
\Bigr)
&=0 \pmod p, 
\end{align*}
where $(i_3,j_3)=(i_0,j_0)$. 
Now, by substituting the shifts defined in \eqref{eq:shifts2} into the above equations, a $6$-cycle exists in the Tanner graph of $H_{2\text{-D}}$ if and only if the following equations are satisfied under modulo $p$:
\begin{equation}\label{eq:6cyclecondition1}
     \gamma(\phi(i_0)+\psi(j_{0}))= \beta(\phi(i_2)+\psi(j_{2}))+
     \alpha(\phi(i_1)+\psi(j_{1})),
\end{equation}
\begin{equation}\label{eq:6cyclecondition2}
       \gamma \phi(i_0)= \beta\phi(i_2)+ \alpha\phi(i_1),
\end{equation}
where $\alpha=\left(\left\lfloor \frac{\Bar{k}_{1}}{p}\right\rfloor-\left\lfloor \frac{\Bar{k}_{0}}{p}\right\rfloor\right)$, $\beta=\left(\left\lfloor \frac{\Bar{k}_{2}}{p}\right\rfloor-\left\lfloor \frac{\Bar{k}_{1}}{p}\right\rfloor\right)$ and $\gamma=\left(\left\lfloor \frac{\Bar{k}_{2}}{p}\right\rfloor-\left\lfloor \frac{\Bar{k}_{0}}{p}\right\rfloor\right)$. It is easy to see that $\gamma=\alpha+\beta$. Now, consider the equation~\eqref{eq:6cyclecondition1}
\begin{equation*}
       \gamma \phi(i_0)= \beta\phi(i_2)+ \alpha\phi(i_1),
\end{equation*} 
if any two or three $\phi(i)$'s are equal then use Lemma 1 from \cite{venkataramanappa2025array} otherwise use Lemma 2 from
\cite{venkataramanappa2025array}. Applying Lemma~1 and Lemma~2 from~\cite{venkataramanappa2025array} to the above equation yields \( i_{0} = i_{1} = i_{2} \). Substituting \( i_{0} = i_{1} = i_{2} \) into the equation~\eqref{eq:6cyclecondition1} gives \( j_{0} = j_{1} = j_{2} \), which contradicts the definition of a closed path. This completes the proof.
\end{proof}
\section{2-D Entanglement-assisted QLDPC codes}\label{sec:quantumconstruction}
In this section, we construct two classes of 2-D entanglement-assisted quantum LDPC codes by emplyoing proposed 2-D classical QC-LDPC codes. The first class is constructed from two distinct 2-D classical QC-LDPC codes, and the second class is obtained by employing a single 2-D classical QC-LDPC code. Notably, the construction of the first class requires only a single ebit, and the unassisted portion of the overall Tanner graph is free of cycles of length $4$. To understand the meaning of the unassisted portion of the overall graph, the reader may refer to \cite{kumar2024entanglement}.

Now, consider the parity-check tensor \(H_{2\text{-D}}\) of dimension $p\times p\times p^{2}$, where stacked permutation tensors are of size $p\times p\times p$, constructed using the shifts defined in \eqref{shifts}. We partition \(H_{2\text{-D}}\) into two sub parity-check tensors, denoted by \(H^{(1)}_{2\text{-D}}\) and $H^{(2)}_{2\text{-D}}$. The first sub parity-check tensor consists of first $w_{1}$ horizontal block-layers, where $1 \leq w_{1} \leq (p-1)$, and the second consists of $w_{2}$ horizontal block-layers, where $1 \leq w_{2} \leq (p-1)$, such that $2 \leq w_{1}+w_{2} \leq p$. Moreover, $H^{(1)}_{2\text{-D}}$ and $H^{(2)}_{2\text{-D}}$ do not share any common horizontal block-layer. We then obtain the following theorems based on~\cite[Corollary~1]{wilde2008optimal}.
\begin{theorem}
    Let $\mathcal{C}_{1}$ and $\mathcal{C}_{2}$ be 2-D classical QC-LDPC codes with  $H^{(1)}_{2\text{-D}}$ and $H^{(2)}_{2\text{-D}}$ as their parity-check tensors. Then there exists a 2-D EA quantum LDPC code with the parameters $[[p^{4}, p^{4}-2p^{2}-(p^{2}-1)(w_{1}+w_{2}-2)+1;1]]_{2}.$ Moreover, its unassisted portion of overall Tanner graph is of girth $>4$.
\end{theorem}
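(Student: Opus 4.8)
The plan is to invoke the entanglement-assisted quantum code construction from Wilde and Brun (cited here as~\cite{wilde2008optimal}, Corollary~1), which takes a pair of classical parity-check matrices $H_1, H_2$ and produces an EA-CSS quantum code. The key quantities are the number of physical qubits $n$ (equal to the block length of the classical codes), the ranks of $H_1$ and $H_2$, and the number of required ebits $e$, which is governed by the rank of the ``symplectic product'' matrix $H_1 H_2^{T}$ (over $\mathbb{F}_2$). Specifically, I would recall that the EA-CSS construction yields parameters $[[n,\,k;\,e]]$ where $e = \mathrm{rank}(H_1 H_2^{T})$ and $k = n - \mathrm{rank}(H_1) - \mathrm{rank}(H_2) + e$. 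So the whole proof reduces to computing three numbers: $n$, the two ranks, and the ebit count $e$.

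\medskip

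\noindent\textbf{Assembling the parameters.} First I would fix $n = p^{4}$, since each classical code has block length $p^{4}$ (the unfolded parity-check matrix $H_{1\text{-D}}$ has $bcp^{2} = p\cdot p\cdot p^{2} = p^{4}$ columns). Next, I would apply Theorem~\ref{th:ranktheorem} to each sub-tensor separately: since $H^{(1)}_{2\text{-D}}$ is built from $w_1$ horizontal block-layers and $H^{(2)}_{2\text{-D}}$ from $w_2$, their $\mathbb{F}_2$-ranks are $\mathrm{gfrank}(H^{(1)}_{2\text{-D}}) = p^{2}+(w_1-1)(p^{2}-1)$ and $\mathrm{gfrank}(H^{(2)}_{2\text{-D}}) = p^{2}+(w_2-1)(p^{2}-1)$. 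Summing these gives $\mathrm{rank}(H_1)+\mathrm{rank}(H_2) = 2p^{2}+(w_1+w_2-2)(p^{2}-1)$. Substituting into the dimension formula $k = n - \mathrm{rank}(H_1) - \mathrm{rank}(H_2) + e$ with $e=1$ yields exactly $k = p^{4} - 2p^{2} - (p^{2}-1)(w_1+w_2-2) + 1$, matching the claimed statement.

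\medskip

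\noindent\textbf{The main obstacle — showing $e=1$.} The crux is proving that exactly one ebit is needed, i.e.\ $\mathrm{rank}_{\mathbb{F}_2}(H^{(1)}_{2\text{-D}} (H^{(2)}_{2\text{-D}})^{T}) = 1$. Here I would exploit Lemma~\ref{maintheorem1}, which characterizes the inner product of any two horizontal layers: it is $p^{2}$ if the layers coincide, $1$ if they lie in different horizontal block-layers, and $0$ otherwise. Since $H^{(1)}_{2\text{-D}}$ and $H^{(2)}_{2\text{-D}}$ share no common horizontal block-layer by construction, every layer of $H_1$ and every layer of $H_2$ lie in different block-layers, so \emph{every} entry of $H_1 H_2^{T}$ over $\mathbb{F}_2$ equals $1 \bmod 2 = 1$. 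Thus $H_1 H_2^{T}$ is the all-ones matrix, which has $\mathbb{F}_2$-rank exactly $1$. This pins down $e=1$ and is the heart of why only a single ebit suffices.

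\medskip

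\noindent\textbf{Girth of the unassisted portion.} Finally, for the girth claim I would note that the unassisted part of the overall Tanner graph corresponds to the classical Tanner structure of $H^{(1)}_{2\text{-D}}$ and $H^{(2)}_{2\text{-D}}$ (minus the single entanglement-carrying check identified by the rank-$1$ symplectic product). Since Lemma~\ref{le:4-cyclefree} guarantees each of $H^{(1)}_{2\text{-D}}$ and $H^{(2)}_{2\text{-D}}$ is free of $4$-cycles, and the single ebit removes only the lone offending overlap, the unassisted portion inherits girth $>4$. I expect the rank computation $e=1$ via the all-ones matrix to be the decisive step; the remaining arithmetic is routine bookkeeping built on the already-established Theorem~\ref{th:ranktheorem} and Lemma~\ref{maintheorem1}.
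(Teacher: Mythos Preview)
Your proposal is correct and follows essentially the same route as the paper: the paper's proof simply states that the result ``follows directly from Lemma~\ref{maintheorem1} and Theorem~\ref{th:ranktheorem}'' (with the Wilde--Brun EA-CSS formula invoked just before the theorem), and your write-up spells out exactly how those two ingredients are used---Theorem~\ref{th:ranktheorem} for the two ranks, and Lemma~\ref{maintheorem1} to see that every cross inner product is $1\bmod 2$, so $H_1H_2^{T}$ is all-ones and $e=1$. Your addition of Lemma~\ref{le:4-cyclefree} for the girth claim is a natural completion the paper leaves implicit.
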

\begin{proof} The proof follows directly from Lemma~\ref{maintheorem1} and Theorem~\ref{th:ranktheorem}.
\end{proof}
\begin{theorem}
    Let $\mathcal{C}$ be a 2-D classical QC-LDPC code and $\widetilde{H}_{2\text{-D}}$ be its parity-check tensor having distinct $w$ number of horizontal block-layers. Then there exists a 2-D EA quantum LDPC code, with the parameters $[[p^{4},(p^{2}-1)(p^{2}-w+1);p^{2}+(w-1)(p^{2}-1)]]_{2}$. In addition, by construction the Tanner graph of $\mathcal{C}$ is free from cyles of length $4$.
\end{theorem}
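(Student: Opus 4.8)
The plan is to invoke the standard entanglement-assisted CSS construction from a \emph{single} binary code \cite{wilde2008optimal}: if $\widetilde{H}_{2\text{-D}}$ is used as the parity-check matrix of both the $X$- and $Z$-type checks, then the resulting EA-QLDPC code has length equal to the number of columns $n$, requires $c=\mathrm{rank}_{\mathbb{F}_2}\big(\widetilde{H}_{2\text{-D}}\widetilde{H}_{2\text{-D}}^{\,T}\big)$ ebits, and encodes $k_Q=n-2\rho+c$ logical qubits, where $\rho=\mathrm{gfrank}(\widetilde{H}_{2\text{-D}})$. First I would pin down the elementary parameters: unfolding the tensor into $H_{1\text{-D}}$ gives $n=p^{4}$ columns, while Theorem~\ref{th:ranktheorem} supplies $\rho=p^{2}+(w-1)(p^{2}-1)$, which is exactly the advertised value of $c$. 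Hence the entire theorem reduces to the single identity $c=\rho$, since then $k_Q=n-\rho=p^{4}-p^{2}-(w-1)(p^{2}-1)=(p^{2}-1)(p^{2}-w+1)$, matching the claimed dimension.

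Because $\mathrm{rank}_{\mathbb{F}_2}(\widetilde{H}_{2\text{-D}}\widetilde{H}_{2\text{-D}}^{\,T})\le\mathrm{rank}_{\mathbb{F}_2}(\widetilde{H}_{2\text{-D}})=\rho$ always holds, the content of $c=\rho$ is the reverse inequality, namely that the Gram matrix of a maximal linearly independent set of layers is nonsingular over $\mathbb{F}_2$. I would take the natural independent system underlying Theorem~\ref{th:ranktheorem}: all $p^{2}$ layers of the first horizontal block-layer together with $p^{2}-1$ layers from each of the remaining $w-1$ block-layers, for a total of $\rho$ rows. Feeding the inner-product values of Lemma~\ref{maintheorem1} into the Gram matrix $G$ and reducing modulo $2$ — using that $p$ is odd, so $p^{2}\equiv 1\pmod 2$ — gives a clean form $G=I+B$, where within any single block-layer the off-diagonal entries vanish, while every entry joining two \emph{different} block-layers equals $1$.

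The hard part will be proving that $G$ is invertible over $\mathbb{F}_2$, and this is exactly where the precise block sizes enter. I would analyse the kernel directly: writing $Gx=0$ as $x=Bx$ and noting that each row of $B$ indexed inside block $i$ equals $\mathbf{1}+\mathbf{1}_i$ (the all-ones vector with the $i$th block complemented), one finds that any $x\in\ker G$ is forced to be constant on each block-layer, say $x\equiv v_i$ on block $i$. Substituting back yields the relations $v_i(1+s_i)=T$ and $T=\sum_i s_i v_i$, where $s_i$ is the parity of the number of retained layers in block $i$ and $T=\sum_a x_a$. The decisive observation is that exactly one block — the first, of size $p^{2}$ — is odd, whereas every other block has the even size $p^{2}-1$; this single odd block forces $T=0$ and then collapses all $v_i$ to $0$, so $\ker G=\{0\}$ and $\mathrm{rank}_{\mathbb{F}_2}(G)=\rho$. (Had two or more block-layers contributed an odd number of independent layers, $G$ would be singular, so this parity bookkeeping is genuinely the crux rather than a formality; note also that the value $c=\rho$ is insensitive to whether one keeps the redundant layers, since the full Gram matrix has this $\rho\times\rho$ nonsingular principal submatrix and rank at most $\rho$.)

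Finally, combining $c=\rho$ with the parameter formulas closes the count and gives the stated $[[p^{4},(p^{2}-1)(p^{2}-w+1);\,p^{2}+(w-1)(p^{2}-1)]]_{2}$ code. The residual claim is then immediate: $\mathcal{C}$ has parity-check tensor obtained by restricting $H_{2\text{-D}}$ to $w$ of its horizontal block-layers, so its Tanner graph is a subgraph of the one treated in Lemma~\ref{le:4-cyclefree} and therefore inherits the absence of $4$-cycles.
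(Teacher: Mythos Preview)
Your proposal is correct and follows the same route the paper implicitly takes by pointing to Theorem~4 of \cite{kumar2024entanglement}: apply the single-code EA-CSS formula $[[n,\,n-2\rho+c;\,c]]$, read off $n=p^{4}$ and $\rho=p^{2}+(w-1)(p^{2}-1)$ from Theorem~\ref{th:ranktheorem}, and then show $c=\rho$ by proving that the $\rho\times\rho$ Gram submatrix over $\mathbb{F}_{2}$ (built from the inner-product values of Lemma~\ref{maintheorem1}) is nonsingular. Your explicit kernel analysis of $G=I+B$, hinging on the fact that exactly one retained block has odd size $p^{2}$ while the others have even size $p^{2}-1$, is precisely the parity argument underlying the cited result, so this is a faithful self-contained version of the paper's deferred proof rather than a different approach.
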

\begin{proof}
    Let $\widetilde{H}_{1\text{-D}}$ denote the 2-D representation of $H^{(1)}_{2\text{-D}}$. By adapting the arguments used in the proof of Theorem~4 in~\cite{kumar2024entanglement} to the present setting, the desired result follows.
\end{proof}
In this section, we have thus far constructed EA quantum LDPC codes by employing the proposed 2-D classical QC-LDPC codes with the prime parameter $p$. In a similar manner, 2-D EA quantum codes can also be constructed from the other 2-D classical QC-LDPC codes developed earlier. However, in those cases, analytically determining the code rate and the required number of entangled bits becomes difficult.
% \begin{proof}
% Let $\Tilde{H}_{1D}$ is the expression of  $H^{(1)}_{2D}$ in 1-D, then Lemma~\ref{maintheorem1}, we have 
% \begin{equation*}
%     \Tilde{H}_{1D}\Tilde{H}_{1D}^{T}=
%   \begin{bmatrix} I_{p^{2}}&J_{p^{2}}&\cdots&J_{p^{2}}\\    \mathbf{0}&I_{p^{2}}+J_{p^{2}}&\cdots&\mathbf{0}\\  
%      \vdots&\vdots&\ddots&\vdots\\
%       \mathbf{0}&\mathbf{0}&\cdots&I_{p^{2}}+J_{p^{2}}
% \end{bmatrix},  
% \end{equation*}
% where $\mathbf{0}$ represents the $p^{2}\times p^{2}$ zero matrix, $I_{p^{2}}$ represents the $p^{2}\times p^{2}$ identity matrix and $J_{p^{2}}$ represents the $p^{2}\times p^{2}$ all ones matrix. One can easily conclude that $\mathrm{gfrank}(HH^{T})$ is $p^{2}+(w_{1}-1)(p^{2}-1)$. By construction, the Tanner graph of the code has girth $>4$; hence, proved.
% \end{proof}
\section{Conclusions}\label{sec:conclusion}
First, we derive a general condition for the existence of $2g$-cycle in 2-D classical QC-LDPC code. We then construct a family of 2-D classical QC-LDPC codes with 2-D burst-erasure correction capability and determine their exact code rates. In addition, we show that their Tanner graphs are free of cycles of length~$4$. Several other families are also constructed with higher girth and the same erasure-correction capability. We employ the constructed 2-D classical QC-LDPC codes to develop two families of entanglement-assisted quantum LDPC codes. In the first family, two distinct 2-D classical QC-LDPC codes are used to ensure that the unassisted portion of the overall Tanner graph is free of $4$-cycles, while requiring only a \textit{single ebit}. The second family is constructed using a single 2-D classical QC-LDPC code, where the underlying classical code itself is free of $4$-cycles.
\section*{Acknowledgment}
P. Kumar is supported by the Institute Post-Doctoral Fellowship from the Indian Institute of Science, Bengaluru, India. S. S. Garani is supported by a grant from the Anusandhan National Research Foundation, Government of India, under Grant No. SERB/F/3132/2023–2024.
\bibliography{mybibliography}

@article{wood2015two,
  author  = {R. Wood and R. Galbraith and J. Coker},
  title   = {2-{D} magnetic recording: Progress and evolution},
  journal = {IEEE Trans. Magn.},
  volume  = {51},
  number  = {4},
  pages   = {1--7},
  month   = apr,
  year    = {2015}
}

@article{kim2012three,
  author  = {Y. Kim and et al.},
  title   = {Three-dimensional {NAND} flash architecture design based on single-crystalline stacked array},
  journal = {IEEE Trans. Electron Devices},
  volume  = {59},
  number  = {1},
  pages   = {35--45},
  month   = jan,
  year    = {2012}
}

@article{garani2018signal,
  author  = {S. S. Garani and L. Dolecek and J. Barry and F. Sala and B. Vasi{\'c}},
  title   = {Signal processing and coding techniques for {2-D} magnetic recording: An overview},
  journal = {Proceedings of the IEEE},
  volume  = {106},
  number  = {2},
  pages   = {286--318},
  month   = feb,
  year    = {2018}
}

@article{chen2013joint,
  author  = {Y. Chen and S. G. Srinivasa},
  title   = {Joint self-iterating equalization and detection for two-dimensional intersymbol-interference channels},
  journal = {IEEE Trans. Commun.},
  volume  = {61},
  number  = {8},
  pages   = {3219--3230},
  month   = aug,
  year    = {2013}
}

@article{fossorier2004quasicyclic,
  author  = {M. Fossorier},
  title   = {Quasicyclic low-density parity-check codes from circulant permutation matrices},
  journal = {IEEE Trans. Inf. Theory},
  volume  = {50},
  number  = {8},
  pages   = {1788--1793},
  month   = aug,
  year    = {2004}
}

@article{matcha2018two,
  author  = {C. K. Matcha and S. Roy and M. Bahrami and B. Vasi{\'c} and S. G. Srinivasa},
  title   = {{2-D LDPC codes and joint detection and decoding for two-dimensional magnetic recording}},
  journal = {IEEE Trans. Magn.},
  volume  = {54},
  number  = {2},
  pages   = {1--11},
  month   = feb,
  year    = {2018}
}

@inproceedings{kamabe2019burst,
  author    = {H. Kamabe and S. Lu},
  title     = {Burst erasure correction by {2D LDPC} codes},
  booktitle = {Proc. IEEE Global Commun. Conf. (GLOBECOM)},
  address   = {Waikoloa, HI, USA},
  pages     = {1--7},
  month     = dec,
  year      = {2019}
}

@inproceedings{kumar2024entanglement,
  title={{Entanglement-Assisted Quasi-Cyclic LDPC Codes}},
  author={Kumar, Pavan and Sharma, Abhi Kumar and Garani, Shayan Srinivasa},
  booktitle={Proc. IEEE Inf. Theory Workshop (ITW)},
  pages={205--210},
  year={2024},
}

@article{mondal2021efficient2,
  title={{Efficient parallel decoding architecture for cluster erasure correcting 2-D LDPC codes for 2-D data storage}},
  author={Mondal, Arijit and Garani, Shayan Srinivasa},
  journal={IEEE Trans. Magn.},
  volume={57},
  number={12},
  pages={1--16},
  year={2021},
  publisher={IEEE}
}

@article{bharadwaj2025efficient,
  title={{Efficient encoding algorithm and architecture for 2-D quasicyclic {LDPC} codes with applications to 2-{D} magnetic recording}},
  author={Bharadwaj, Karthik and Garani, Shayan Srinivasa},
  journal={IEEE Trans. Magn.},
  year={2025},
  publisher={IEEE}
}

@article{venkataramanappa2025array,
title={Array-Based Spatially Coupled {LDPC} Codes Using Generalized {B}ehrend Sequences},
  author={Venkataramanappa, Suresh Kumar and Kumar, Pavan and Mishra, Deepak and Garani, Shayan Srinivasa},
  journal={Authorea Preprints},
  year={2025},
  publisher={Authorea}
}

@article{yang2025spatially,
  title={Spatially-Coupled {QLDPC} Codes},
  author={Yang, Siyi and Calderbank, Robert},
  journal={Quantum},
  volume={9},
  pages={1693},
  year={2025},
  publisher={Verein zur F{\"o}rderung des Open Access Publizierens in den Quantenwissenschaften}
}

@article{berthusen2025toward,
  title={Toward a 2{D} local implementation of quantum low-density parity-check codes},
  author={Berthusen, Noah and Devulapalli, Dhruv and Schoute, Eddie and Childs, Andrew M and Gullans, Michael J and Gorshkov, Alexey V and Gottesman, Daniel},
  journal={PRX Quantum},
  volume={6},
  number={1},
  pages={010306},
  year={2025},
  publisher={APS}
}

@article{ruiz2025ldpc,
  title={{LDPC}-cat codes for low-overhead quantum computing in 2{D}},
  author={Ruiz, Diego and Guillaud, J{\'e}r{\'e}mie and Leverrier, Anthony and Mirrahimi, Mazyar and Vuillot, Christophe},
  journal={Nat. Commun.},
  volume={16},
  number={1},
  pages={1040},
  year={2025},
  publisher={Nature Publishing Group UK London}
}

@inproceedings{fan2018construction,
  title={Construction and performance of quantum burst error correction codes for correlated errors},
  author={Fan, Jihao and Hsieh, Min--Hsiu and Chen, Hanwu and Chen, He and Li, Yonghui},
  booktitle={in Proc. IEEE Int. Symp. Inf. Theory (ISIT)},
  pages={2336--2340},
  year={2018},
}

@book{gu2024quantum,
  title={Quantum error correction using low-density parity-check codes and erasure qubits},
  author={Gu, Shouzhen},
  year={2024},
  publisher={California Institute of Technology}
}

@article{wilde2008optimal,
  title={Optimal entanglement formulas for entanglement-assisted quantum coding},
  author={Wilde, Mark M and Brun, Todd A},
  journal={Phys. Rev. A},
  volume={77},
  number={6},
  pages={064302},
  year={2008},
  publisher={APS}
}
\end{document}